\documentclass[11pt,a4paper]{article}

\usepackage[utf8]{inputenc}
\usepackage[T1]{fontenc}

\usepackage{amsmath,amssymb,amsthm}

\newtheorem{theorem}{Theorem}[section]
\newtheorem{proposition}[theorem]{Proposition}
\newtheorem{corollary}[theorem]{Corollary}

\theoremstyle{definition}
\newtheorem{definition}[theorem]{Definition}

\theoremstyle{remark}

\newtheorem{property}[theorem]{Property}

\DeclareMathOperator*{\argmax}{arg\,max}

\usepackage{booktabs}
\usepackage{graphicx}
\usepackage{float}

\graphicspath{{figures/}{./figures/}{../figures/}}

\usepackage{algorithm}
\usepackage{algpseudocode}

\usepackage{listings}
\usepackage{xspace}
\usepackage{xcolor}

\usepackage{hyperref}
\hypersetup{
    colorlinks=true,
    linkcolor=blue,
    citecolor=blue,
    urlcolor=blue
}
\usepackage[capitalise,noabbrev]{cleveref}

\usepackage{geometry}
\newcommand{\CVPL}{\mathrm{CVPL}}

\newcommand{\CVPLLR}{\text{CVPL-LR}}

\newcommand{\opB}{B_\lambda}              
\newcommand{\opphi}{\varphi}              
\newcommand{\oppsi}{\psi}                 
\newcommand{\optau}{\tau}                 

\newcommand{\Dor}{D}                      
\newcommand{\Dpr}{D'}                     

\newcommand{\Lstar}{\mathcal{L}^*}        
\newcommand{\ystar}{y^*}                  

\newcommand{\Qset}{Q}                     
\newcommand{\Aset}{A}                     
\newcommand{\Sset}{S}                     

\newcommand{\Bset}{\mathcal{B}}           
\newcommand{\Bcand}[1]{C_B(#1)}           
\newcommand{\Rblock}{R_{\text{block}}}    

\newcommand{\TLR}{\text{TLR}}             
\newcommand{\FLR}{\text{FLR}}             
\newcommand{\PrecOne}{\text{Prec@1}}      

\newcommand{\RiskSurf}[2]{R(#1, #2)}      

\newcommand{\simcos}{\mathrm{cos}}

\newcommand{\Splus}{S^{+}}                
\newcommand{\Sminus}{S^{-}}               

\newcommand{\Reals}{\mathbb{R}}

\newcommand{\Ind}{\mathbb{1}}             
\newcommand{\Prob}{\mathbb{P}}            
\newcommand{\Exp}{\mathbb{E}}             

\newcommand{\vx}{\mathbf{v}_x}

\newcommand{\zx}{\mathbf{z}_x}
\newcommand{\zy}{\mathbf{z}_y}

\newcommand{\eg}{e.g.\@\xspace}
\newcommand{\ie}{i.e.\@\xspace}

\hypersetup{
    pdfauthor={Valery Khvatov, Alexey Neyman},
    pdftitle={CVPL: A Geometric Framework for Post-Hoc Linkage Risk Assessment in Protected Tabular Data},
    pdfsubject={Privacy, Record Linkage, Synthetic Data},
    pdfkeywords={privacy, linkage risk, k-anonymity, synthetic data, PCA}
}

\begin{document}


\title{CVPL: A Geometric Framework for Post-Hoc Linkage Risk Assessment in Protected Tabular Data}

\author{
  Valery Khvatov\textsuperscript{1} \and 
  Alexey Neyman\textsuperscript{2}
}

\date{}

\maketitle

\begin{center}
\textsuperscript{1}REALM AI, Toronto, Canada \\
\texttt{valery.khvatov@realmdata.io} \\[0.5em]
\textsuperscript{2}The Big Data Association (Remote) / International
\end{center}

\begin{abstract}
Formal privacy metrics provide compliance-oriented guarantees but often
fail to quantify actual linkability in released datasets. Model-based
attacks, such as membership inference, can estimate empirical risk;
however, they are computationally expensive and typically require access
to trained models. A practical, data-centric tool for measuring residual
linkage risk, therefore, remains lacking.

We introduce \textbf{CVPL (Cluster-Vector-Projection Linkage)}, a
geometric framework for post-hoc assessment of linkage risk between
original and protected tabular data. CVPL represents linkage analysis as
an operator pipeline comprising blocking, vectorization, latent
projection, and similarity evaluation, yielding continuous,
scenario-dependent risk estimates rather than binary compliance
verdicts. We define CVPL formally under an explicit threat model and
introduce threshold-aware risk surfaces $R(\lambda, \tau)$ that
capture the joint effects of protection strength and attacker
strictness. We further establish a progressive blocking strategy with
monotonicity guarantees, enabling anytime risk estimation with valid
lower bounds (convergence from below).

We show that classical Fellegi--Sunter linkage arises as a special case
of CVPL under restrictive assumptions, and that violations of these
assumptions can induce systematic over-linking bias. Empirical
validation on \emph{10,000 records across 19 protection} configurations
demonstrates that formal k-anonymity compliance may coexist with
substantial empirical linkability, a significant portion of which arises
from non--quasi-identifier behavioral patterns rather than traditional
quasi-identifiers. Finally, CVPL provides interpretable diagnostics
identifying which features drive linkage feasibility.

CVPL does not offer formal privacy guarantees. Instead, it supports
privacy impact assessment, protection mechanism comparison, and
\emph{utility--risk trade-off} analysis.
\end{abstract}

\textbf{Keywords:} privacy risk assessment; linkage risk; record
linkage; k-anonymity; synthetic data; quasi-identifiers; latent space
matching; Fellegi--Sunter; post-hoc evaluation; statistical disclosure
control; empirical disclosure risk


\section{Introduction}\label{introduction}

\subsection{Motivation}\label{motivation}

Data anonymization and synthetic data generation are widely used to
support data sharing, analytics, and machine learning while reducing
privacy risks. Formal privacy criteria such as k-anonymity \cite{sweeney2002},
l-diversity \cite{machanavajjhala2007}, t-closeness \cite{li2007}, and differential privacy
\cite{dwork2006calibrating} provide strong, compliance-oriented guarantees. However, they
often offer limited visibility into empirical risks that arise when
protected data continues to preserve meaningful statistical dependence
and geometric structure.

In practice, satisfying a formal criterion does not imply low
linkability under realistic attacker assumptions. A dataset that
achieves k = 20 anonymity may still be linkable if correlations across
attributes, temporal regularities, or latent cluster structures are
retained for utility. Formal criteria answer the question, ``Is the
release compliant?'', but they do not address the complementary
question, ``What is the residual linkage risk under plausible attack
scenarios?''

Model-based privacy attacks provide a different perspective. Membership
inference, for example \cite{shokri2017membership}, estimates empirical leakage by attacking
a trained model. However, this class of methods is not designed for
post-hoc evaluation of released datasets. It typically requires access
to trained models, significant computational resources, and produces
outputs that are difficult to interpret in governance, audit, and
privacy impact assessment workflows.

These gaps motivate the need for a practical, data-centric approach that
measures residual linkage risk directly on released datasets, under an
explicit threat model, and with outputs that remain interpretable for
privacy assessment and decision-making.

\subsection{The Gap}\label{the-gap}

Current approaches to privacy risk assessment occupy distinct but weakly
connected positions.

\begin{table}[htbp]
\centering
\caption{Comparison of privacy risk assessment approaches.}
\label{tab:approaches}
\begin{tabular}{@{}lp{3.5cm}p{5.5cm}@{}}
\toprule
Approach & Strength & Limitation \\
\midrule
Formal privacy metrics & Rigorous guarantees & No empirical measurement of linkage risk \\
Model-based attacks & Empirical risk estimation & Requires model access and high computational cost \\
Classical record linkage & Established methodology & Distance-based metrics \\
Distance-based metrics & Simple, model-free quantification & May be misleading under correlation and lack of structural context \\
\bottomrule
\end{tabular}
\end{table}

Formal privacy metrics certify compliance but do not quantify residual
linkability in released data. Model-based attacks estimate leakage
empirically, yet they target trained models rather than datasets and are
poorly suited for post-hoc audits. Classical record linkage provides
mature techniques but assumes attribute-level independence and focuses
on deterministic or binary matching outcomes. Distance-based metrics
offer simplicity and scalability but fail to capture higher-order
dependence and latent structure preserved by utility-oriented
transformations.

This work addresses this methodological gap by proposing \textbf{CVPL},
a framework designed not to recover identities or perform
de-anonymization, but to measure residual linkage risk in protected
datasets under explicitly defined threat models.

\subsection{Central Thesis}\label{central-thesis}

\emph{CVPL introduces a geometric, simulation-based approach to linkage
risk assessment that complements formal privacy criteria by quantifying
residual risk arising from preserved data structure. By exploiting
structural invariants that protection mechanisms retain for utility,
CVPL reveals linkability that may persist even under formal compliance
and reframes privacy as a continuous, measurable quantity rather than a
binary property.}

\subsection{Contributions}\label{contributions}

\textbf{C1. Framework Formalization.} We define CVPL as an explicit
operator composition,

\[\text{CVPL} = \tau \circ s \circ \psi \circ \varphi \circ B,\]

providing a clear separation between linkage mechanics, evaluation
protocol, and threat assumptions.

\textbf{C2. Threat Model Specification.} We formalize the attacker model
underlying CVPL and distinguish it from cooperative record linkage and
model-based privacy attacks, positioning CVPL as a post-hoc risk
assessment framework rather than an attack method.

\textbf{C3. Risk Surface Perspective.} We argue that linkage risk should
be characterized as a surface \(R(\lambda,\tau)\)over protection
parameters and attacker strictness, avoiding arbitrary single-threshold
assessments.

\textbf{C4. Mathematical Properties.} We establish Theorem 1 showing
that blocking relaxation monotonically increases risk estimates,
enabling anytime evaluation with guaranteed lower bounds.

\textbf{C5. Empirical Validation.} We establish key properties of the
framework, including the decomposition of blocking-induced bias,
threshold-based false-positive control, the relationship to
Fellegi--Sunter linkage under restrictive assumptions, and FS
over-linking bias under representation shift.

\textbf{C6. Practical Applicability.} We evaluate CVPL on a realistic
marketing touchpoint scenario with 10,000 records across 19 protection
configurations, showing that formal k-anonymity compliance coexists with
substantial empirical linkability, with 60.4\% of detected risk arising
from non-QI behavioral patterns.

\textbf{C7. Interpretable Diagnostics.} We provide feature-level
attribution showing which attributes drive linkability, enabling
actionable insights for privacy engineering.

\textbf{C8. Baseline Comparison.} We systematically compare CVPL against
Fellegi-Sunter, DCR, and NNDR, demonstrating complementary strengths and
identifying systematic biases in existing methods.

\subsection{Scope and Limitations}\label{scope-and-limitations}

CVPL is an empirical risk estimator, not a formal privacy guarantee. It
measures linkage feasibility under specified assumptions and should be
interpreted as one component of a comprehensive privacy assessment, not
a replacement for formal analysis. Key limitations include: (1)
validation requires ground truth linkages, typically available only in
simulation or controlled settings; (2) results depend on feature
selection, blocking strategy, and projection method; (3) experiments in
this paper use simulated data - extension to real datasets is discussed
but not fully demonstrated. We discuss limitations explicitly in Section
10.

\subsection{Paper Organization}\label{paper-organization}

Section 2 defines the problem setting and threat model. Section 3
reviews related to work. Section 4 formalizes the CVPL framework,
including progressive blocking. Section 5 presents the evaluation
protocol. Section 6 introduces threshold calibration and risk surfaces.
Section 7 describes experimental design. Section 8 presents results
including baseline comparisons and convergence analysis. Section 9
discusses implications. Section 10 addresses limitations. Section 11
concludes.


\section{Problem Setting and Threat Model}\label{sec:problem-setting}

\subsection{Data Model}\label{sec:data-model}

Let:
\begin{itemize}
    \item $\Dor = \{x_i\}_{i=1}^{n}$ denote the \textbf{original (source) dataset},
    \item $\Dpr = \{y_j\}_{j=1}^{m}$ denote the \textbf{protected dataset}, obtained from $\Dor$ through anonymization, perturbation, or synthetic data generation:
    \begin{itemize}
        \item Record-preserving transformations (k-anonymity, perturbation): $m = n$ with implicit correspondence between $x_i$ and $y_i$.
        \item Generative transformations (synthetic data): $m$ may differ from $n$, with no guaranteed correspondence between individual records.
    \end{itemize}
    \item $\Lstar \subset \Dor \times \Dpr$ denote the \textbf{true linkage relation}, used exclusively for evaluation and never available to the attacker or the CVPL procedure.
\end{itemize}

Each record is represented as a tuple of attributes, partitioned into three disjoint subsets:
\begin{itemize}
    \item $\Qset$: \textbf{quasi-identifiers}, observable by the attacker and used for blocking and candidate generation.
    \item $\Aset$: \textbf{analytical attributes}, used for vectorization and similarity computation.
    \item $\Sset$: \textbf{sensitive attributes}, which constitute the protection target and are not directly used in CVPL.
\end{itemize}

This partition reflects a common privacy threat setting in which attackers can observe generalized identifiers and analytical features, while sensitive attributes are protected or perturbed.

\subsection{Threat Model}\label{sec:threat-model}

CVPL considers an attacker with the following capabilities:

\begin{enumerate}
    \item \textbf{Full access to released datasets.} Full access to the protected dataset $\Dpr$ and (potentially) to an auxiliary dataset $X$ containing records about the same or overlapping population, with shared attributes in $\Qset \cup \Aset$.
    
    \item \textbf{No mechanism access.} No knowledge of anonymization parameters, synthetic generator architecture, or noise distributions.
    
    \item \textbf{Quasi-identifier awareness.} Ability to identify and exploit generalized quasi-identifiers for blocking.
    
    \item \textbf{Similarity-based inference.} Capability to perform vectorization, projection, and similarity analysis within candidate blocks.
    
    \item \textbf{Variable strictness.} Ability to vary the decision threshold $\optau$ according to risk tolerance or attack strategy.
    
    \item \textbf{Worst-case auxiliary data.} For risk assessment, CVPL sets $X = \Dor$ (the original dataset). This conservative assumption provides an upper bound on linkage risk: any realistic $X$ with partial population overlap or incomplete attributes yields a risk bounded above by the CVPL estimate. \Cref{sec:progressive-blocking} formalizes this through monotonicity under relaxation operators.
\end{enumerate}

CVPL explicitly does not assume:
\begin{itemize}
    \item access to hidden identifiers or ground-truth linkage during the attack,
    \item availability of trained models or generation mechanisms,
    \item knowledge of the specific auxiliary dataset $X$ that any real attacker may possess (risk is evaluated under a worst-case envelope).
\end{itemize}

This threat model reflects a realistic post-hoc auditing scenario rather than an adaptive or interactive attack, while providing conservative risk estimates applicable across a range of attacker capabilities.

\subsection{Attack Goal}\label{sec:attack-goal}

The attacker's objective is to determine, for records $x \in \Dor$, whether there exist records $y \in \Dpr$ such that the pair $(x, y)$ plausibly corresponds to the same underlying entity.

Success is measured in terms of linkage feasibility, defined as the existence of statistically plausible candidate links under the attacker's similarity and threshold assumptions. CVPL does not aim to recover unique identities or produce definitive matches, but rather to assess whether protected data permits meaningful linkage at all.

\subsection{Distinction from Related Attack Models}\label{sec:attack-distinction}

\begin{table}[htbp]
\centering
\caption{Comparison of attack types and their relation to CVPL.}
\label{tab:attack-types}
\begin{tabular}{@{}llll@{}}
\toprule
\textbf{Attack Type} & \textbf{Target} & \textbf{Access Required} & \textbf{Relation to CVPL} \\
\midrule
Membership inference & ML model & Model queries & Complementary, model-side \\
Attribute inference & ML model & Model and partial record & Different target \\
Cooperative record linkage & Databases & Both datasets & Similar techniques, different intent \\
\textbf{CVPL} & Data release & Datasets only & Adversarial risk assessment \\
\bottomrule
\end{tabular}
\end{table}

CVPL is best understood as adversarial record linkage for privacy auditing. It applies linkage techniques not to integrate databases, but to quantify residual linkage risk in protected data releases.

\subsection{Linkage Semantics: Existential vs.\ Unique}\label{sec:linkage-semantics}

We distinguish two notions of linkage corresponding to different privacy questions.

\paragraph{Existential linkage.}
A record $x \in \Dor$ is considered linkable if there exists at least one candidate $y \in \Dpr$ such that the similarity exceeds a decision threshold $\optau$ (with $s(x,y) \in [0,1]$ and $\optau \in [0,1]$):
\begin{equation}\label{eq:existential-linkage}
    \exists\, y \in \Dpr \;\text{such that}\; s(x,y) \geq \optau.
\end{equation}

The threshold $\optau$ controls the trade-off between sensitivity and specificity: lower values increase recall (\ie more records deemed linkable) at the cost of precision, while higher values yield more conservative estimates. This notion captures a privacy-relevant worst-case risk. It reflects whether an attacker can plausibly claim that a link exists, even when multiple candidates satisfy the criterion.

\paragraph{Unique linkage (top-1).}
A record $x \in \Dor$ is uniquely linkable if the highest-similarity candidate corresponds to the true match:
\begin{equation}\label{eq:unique-linkage}
    \argmax_{y \in \Dpr} s(x,y) \in L^*(x),
\end{equation}
where $L^*(x)$ denotes the true counterpart of $x$ in $\Dpr$, when ground truth is available for evaluation. For protection mechanisms that may produce multiple valid counterparts (\eg one-to-many synthetic generation), we generalize to $L^*(x) \subseteq \Dpr$ and require $\argmax \in L^*(x)$. This notion measures de-anonymization accuracy rather than linkage feasibility.

By default, \textbf{CVPL-LR (linkage rate)} is defined using existential linkage, as it answers the privacy-relevant question: \emph{Can a link be plausibly claimed under the attacker model?} In scenarios where unique identification accuracy is required, we report top-1 metrics, such as top-1 precision, separately.

\paragraph{Formal clarification.}
CVPL-LR measures the probability of existential linkability, not the probability of correct re-identification. To formalize this, we introduce two components:

\begin{itemize}
    \item Let $B\colon \Dor \cup \Dpr \to \Bset$ denote a \textbf{blocking function} derived from quasi-identifiers $\Qset$ (\eg exact match, hierarchical grouping, or locality-sensitive hashing), mapping each record to a discrete block identifier. For a record $x \in \Dor$, the \textbf{candidate block} is:
    \begin{equation}\label{eq:candidate-block}
        \Bcand{x} = \{y \in \Dpr : B(y) = B(x)\}.
    \end{equation}
    The full specification of blocking mechanics appears in \cref{sec:blocking}.
    
    \item Let $x \sim \mathcal{U}(\Dor)$ denote a record sampled uniformly at random from $\Dor$.
\end{itemize}

The linkage rate is then defined as the probability that a randomly sampled record has at least one candidate in its block exceeding the similarity threshold:
\begin{equation}\label{eq:cvpl-lr}
    \CVPLLR(\optau) = \Prob_{x \sim \mathcal{U}(\Dor)}\bigl[\exists\, y \in \Bcand{x} : s(x,y) \geq \optau\bigr].
\end{equation}

This notion is strictly weaker than identifying the true match $\ystar$. Accordingly, CVPL-LR should be interpreted as a \emph{conservative risk exposure metric} capturing the existence of plausible links under the chosen similarity and blocking model, rather than as an attack success rate for correct re-identification. A high CVPL-LR indicates that plausible claims of linkage exist even if the attacker cannot uniquely identify the correct target.

\paragraph{Implication for privacy assessment.}
This existential interpretation aligns with regulatory and risk-based perspectives in which the possibility of linkage itself constitutes privacy risk, independent of whether correct identification is ultimately achieved.


\section{Related Work}\label{sec:related-work}

\subsection{Formal Privacy Criteria}\label{sec:formal-privacy-criteria}

\paragraph{k-anonymity and extensions.}
k-anonymity was introduced by Sweeney~\cite{sweeney2002} and requires each record to be indistinguishable from at least $(k-1)$ others with respect to a set of quasi-identifiers. Well-known extensions include $\ell$-diversity~\cite{machanavajjhala2007} and $t$-closeness~\cite{li2007}, which attempt to address attribute disclosure by constraining the distribution of sensitive values within equivalence classes.

These criteria provide compliance-oriented conditions that are typically evaluated as binary pass or fail outcomes for a given set of quasi-identifiers. However, they do not quantify residual linkage risk that can arise when non-QI attributes preserve identity-correlated structure. In practice, a dataset can satisfy k-anonymity while still exhibiting substantial linkability due to retained correlations, temporal regularities, or cluster structure needed for utility.

\paragraph{Differential privacy.}
Differential privacy (DP) provides a formal bound on the influence of an individual on an algorithm's outputs~\cite{dwork2006calibrating}. DP was originally formulated for query answering and statistical releases, and it has since been applied to generative modeling and DP-based synthetic data generation. In data-release settings, DP-based synthesis offers theoretical guarantees, but it does not directly produce an interpretable estimate of linkage risk for a specific released dataset under a concrete attacker model.

CVPL complements these approaches by providing a data-centric, post-hoc measurement framework that helps determine whether formal guarantees align with empirical linkability in practice.

\subsection{Model-Based Privacy Attacks}\label{sec:model-based-attacks}

\paragraph{Membership inference.}
Membership inference attacks (MIA) demonstrate that trained machine learning models can leak information about whether a specific record was included in the training data~\cite{shokri2017membership}. This line of work has since expanded to broader empirical privacy tests targeting learned models, including generative models, which have been shown to memorize and, in some cases, reproduce training examples~\cite{carlini2019secret,carlini2021extracting}.

A key distinction lies in the attack surface. MIA primarily targets deployed models, whereas CVPL targets released datasets. As a result, the two approaches address different privacy questions and rely on different access assumptions.

\begin{table}[htbp]
\centering
\caption{Comparison of Membership Inference Attacks (MIA) and CVPL.}
\label{tab:mia-vs-cvpl}
\begin{tabular}{@{}lp{4cm}p{5.5cm}@{}}
\toprule
\textbf{Aspect} & \textbf{MIA} & \textbf{CVPL} \\
\midrule
Attack object & Trained model (classifier, generator) & Released dataset \\
Question & ``Was $x$ in the training set?'' & ``Can $x$ be plausibly linked to some $y$?'' \\
Required access & Query or model access & Access to $\Dor$ and $\Dpr$ \\
Typical cost & High (shadow models, training) & Medium (blocking, linear algebra, similarity search) \\
Applies when & Model is deployed & Data is released \\
Mechanism knowledge & Often helpful or required & Not required \\
\bottomrule
\end{tabular}
\end{table}

CVPL is applicable in scenarios where model-based testing is unavailable or impractical, for example, when the protection mechanism is proprietary, the generator is not accessible, or the release is provided without a model interface.

Importantly, CVPL and MIA are complementary rather than competing. They measure different privacy risk surfaces, and results from one should not be interpreted as subsuming or dominating the other. CVPL is not intended to replace model-based attacks, but to provide a data-centric risk assessment when only released datasets are available.

\subsection{Classical Record Linkage}\label{sec:classical-linkage}

\paragraph{Fellegi--Sunter framework.}
Probabilistic record linkage was formalized by Fellegi and Sunter~\cite{fellegi1969theory} using attribute-level agreement patterns and likelihood ratio weighting, typically under conditional independence assumptions. The framework is foundational and remains widely used in data integration and entity resolution.

For privacy auditing, however, classical record linkage methods face several limitations. First, the independence assumptions underlying attribute-level weighting are often violated in protected data releases that intentionally preserve correlations to retain utility. Second, the design goal of classical linkage is usually cooperative integration rather than adversarial assessment under an explicit threat model. Third, many implementations focus on binary match or non-match decisions, whereas privacy auditing benefits from continuous risk measures and attacker-dependent regimes. Finally, performance may degrade under noise injections and synthetic transformations that alter marginal distributions while preserving latent structure.

In \cref{sec:framework}, we show that the Fellegi--Sunter linkage can be interpreted as a special case of the CVPL formulation under restrictive assumptions on representation and decision structure. In \cref{sec:results}, we demonstrate empirically that when these assumptions are violated, as is typical in utility-preserving protection, FS exhibits systematic over-linking bias (93\% linkage rate but only 13\% precision).

\subsection{Empirical Disclosure Risk}\label{sec:empirical-disclosure-risk}

CVPL is closely aligned with the tradition of \textbf{Empirical Disclosure Risk (EDR)} in statistical disclosure control, which emphasizes operational linkability---the practical feasibility of re-identification---rather than formal, worst-case privacy guarantees~\cite{elliot2010functional}. Within this perspective, privacy risk is assessed based on whether released data can be plausibly linked to external information under realistic assumptions, rather than solely on compliance with abstract criteria.

A substantial body of prior work has developed empirical approaches to disclosure risk assessment. \cite{skinner1994disclosure} provide comprehensive guidelines for data de-identification that rely on empirical risk evaluation under plausible attacker models. \cite{skinner1994disclosure} formalize disclosure risk measures for microdata, focusing on re-identification feasibility. The $\mu$-ARGUS framework~\cite{hundepool2012statistical} operationalizes empirical disclosure risk assessment in official statistics, combining record linkage techniques with practical auditing workflows.

CVPL extends this line of work in several respects. First, it models linkage risk in a geometric similarity space rather than relying solely on attribute-level agreement patterns. Second, it characterizes risk as a continuous surface over protection parameters and attacker thresholds, rather than as a single point estimate. Third, it explicitly incorporates attacker strictness as a first-class variable, enabling scenario-dependent risk analysis.

In this sense, CVPL should be viewed as a geometric and simulation-based instantiation of empirical disclosure risk assessment, designed for modern protected and synthetic datasets that preserve complex statistical structure.

\subsection{Synthetic Data Evaluation}\label{sec:synthetic-evaluation}

Recent work has emphasized that the use of synthetic data does not automatically guarantee privacy protection. \cite{stadler2022synthetic} demonstrate that synthetic data generators can leak training data through membership inference attacks, challenging the assumption of ``privacy by synthesis''. Related studies show that memorization and partial reproduction of training records can occur even when models are trained without explicit overfitting signals~\cite{carlini2019secret,carlini2021extracting,jordon2022synthetic,xu2019modeling}.

In practice, synthetic data evaluation often relies on distance-based heuristics, most notably Distance to Closest Record (DCR) and Nearest Neighbor Distance Ratio (NNDR)~\cite{zhao2021ctab}. While these metrics are simple to compute, they exhibit several limitations for privacy assessment. First, they measure geometric proximity rather than operational linkability and do not require ground-truth correspondence. Second, low distances may reflect preserved distributional structure or feature correlations rather than individual re-identification risk, leading to misleading interpretations~\cite{jordon2022synthetic}. Third, the choice of safety thresholds for such distances is inherently ambiguous and context-dependent.

Model-based evaluations, such as membership inference against generators, provide empirical leakage estimates but require access to the generative model and substantial computational effort. These approaches are therefore inapplicable when the generator is proprietary, inaccessible, or absent from the data release.

CVPL provides a complementary evaluation perspective for synthetic data. Rather than focusing on generator behavior or raw distance metrics, it assesses linkage feasibility directly at the dataset level by analyzing structural invariants preserved in the released data. This makes CVPL applicable in post-hoc auditing scenarios where only the original dataset $\Dor$ and the released dataset $\Dpr$ are available, and where the privacy risk arises from retained structure rather than explicit memorization.

\subsection{Positioning of CVPL}\label{sec:positioning}

CVPL occupies an intermediate methodological position within the landscape of privacy risk assessment techniques. It is more informative than binary compliance checks, as it estimates residual linkage risk under an explicit threat model rather than providing pass/fail guarantees. At the same time, it is typically less computationally demanding and less dependent on access to training procedures or internal mechanisms than model-based privacy attacks.

Unlike classical record linkage methods, which are primarily designed for cooperative data integration and entity resolution, CVPL is explicitly formulated for adversarial privacy auditing. Its objective is not to recover identities or establish definitive matches, but to quantify the feasibility of linkage under attacker-dependent assumptions. This leads to a focus on continuous, scenario-dependent risk characterization rather than binary match decisions.

Viewed across prior approaches discussed in \cref{sec:model-based-attacks,sec:classical-linkage,sec:empirical-disclosure-risk,sec:synthetic-evaluation}, CVPL bridges formal privacy metrics, empirical disclosure risk assessment, and classical linkage analysis. It combines the interpretability and dataset-level focus of empirical disclosure risk methods with a geometric representation that captures structural invariants preserved for utility. As a result, CVPL enables fine-grained risk analysis in settings where models are inaccessible, generators are proprietary, or only released datasets are available.

Conceptually, CVPL can be positioned between formal criteria and model-based attacks along three dimensions: computational cost, interpretability, and granularity of risk estimates. It offers higher resolution than distance-based heuristics (such as DCR and NNDR) and compliance metrics, while remaining more practical for post-hoc auditing than model-centric empirical attacks.


\section{CVPL Framework and Formal Definition}\label{sec:framework}

\subsection{Operator Decomposition}\label{sec:operator-decomposition}

We define CVPL as a composition of interpretable operators acting on a pair of datasets $(\Dor, \Dpr)$. Each record is partitioned into three disjoint attribute subsets: $\Qset$ (quasi-identifiers), $\Aset$ (observable analytical attributes), and $\Sset$ (sensitive attributes). All linkage operators act only on the observable projection $\Qset \cup \Aset$; $\Sset$ is excluded and used only for downstream inference analysis.

For parameters $\lambda$ (\emph{blocking configuration}) and $\optau$ (\emph{decision threshold}), we define:
\begin{equation}\label{eq:cvpl-composition}
    \CVPL_{\lambda,\optau} = \optau \circ s \circ \oppsi \circ \opphi \circ \opB.
\end{equation}

Here:
\begin{itemize}
    \item $\opB\colon \Dor \cup \Dpr \to \Bset$ is a blocking operator parameterized by $\lambda$, mapping records to block identifiers shared between $\Dor$ and $\Dpr$. The block identifier is derived from quasi-identifiers $\Qset$ (\eg exact match, hierarchical grouping, or locality-sensitive hashing). We adopt the term ``blocking'' from record linkage literature, where it denotes candidate pair reduction before detailed comparison---distinct from clustering or entity resolution.
    
    \item $\opphi\colon x \mapsto \vx \in \Reals^d$ is a vectorization operator that encodes the observable attributes ($\Qset \cup \Aset$) into a numeric feature space. Sensitive attributes $\Sset$ are excluded from vectorization, as they constitute the protection target.
    
    \item $\oppsi\colon \Reals^d \to \Reals^k$ with $k \ll d$ is a \emph{latent projection} (\eg PCA, UMAP), fit jointly on $\Dor \cup \Dpr$, producing compact embeddings.
    
    \item $s\colon \Reals^k \times \Reals^k \to \Reals$ is a similarity function, computed between latent embeddings.
    
    \item $\optau \in \Reals$ is a decision threshold controlling the attacker's strictness: larger $\optau$ yields fewer but stronger candidate links.
\end{itemize}

This decomposition isolates the contribution of each stage of the linkage process. Each operator corresponds to a distinct attacker capability or assumption, enabling interpretability, ablation analysis, and systematic variation of threat models.

\Cref{fig:cvpl-pipeline} illustrates the CVPL operator pipeline. For a record $x \in \Dor$, blocking restricts candidate matches to:
\begin{equation}\label{eq:candidate-set}
    C(x) = \{y \in \Dpr : \opB(x) = \opB(y)\}.
\end{equation}
A record is considered linkable if there exists at least one candidate whose similarity exceeds the threshold:
\begin{equation}\label{eq:linkable-indicator}
    \mathrm{LINKABLE}(x) = \Ind\bigl[\exists\, y \in C(x) : s(\oppsi(\opphi(x)), \oppsi(\opphi(y))) \geq \optau\bigr].
\end{equation}
The CVPL linkage rate (CVPL-LR) is then defined as the fraction of linkable records in $\Dor$:
\begin{equation}\label{eq:cvpl-lr-def}
    \CVPLLR(\optau) = \frac{|\{x \in \Dor : \mathrm{LINKABLE}(x) = 1\}|}{|\Dor|}.
\end{equation}

\begin{figure}[htbp]
    \centering
    \includegraphics[width=0.95\textwidth]{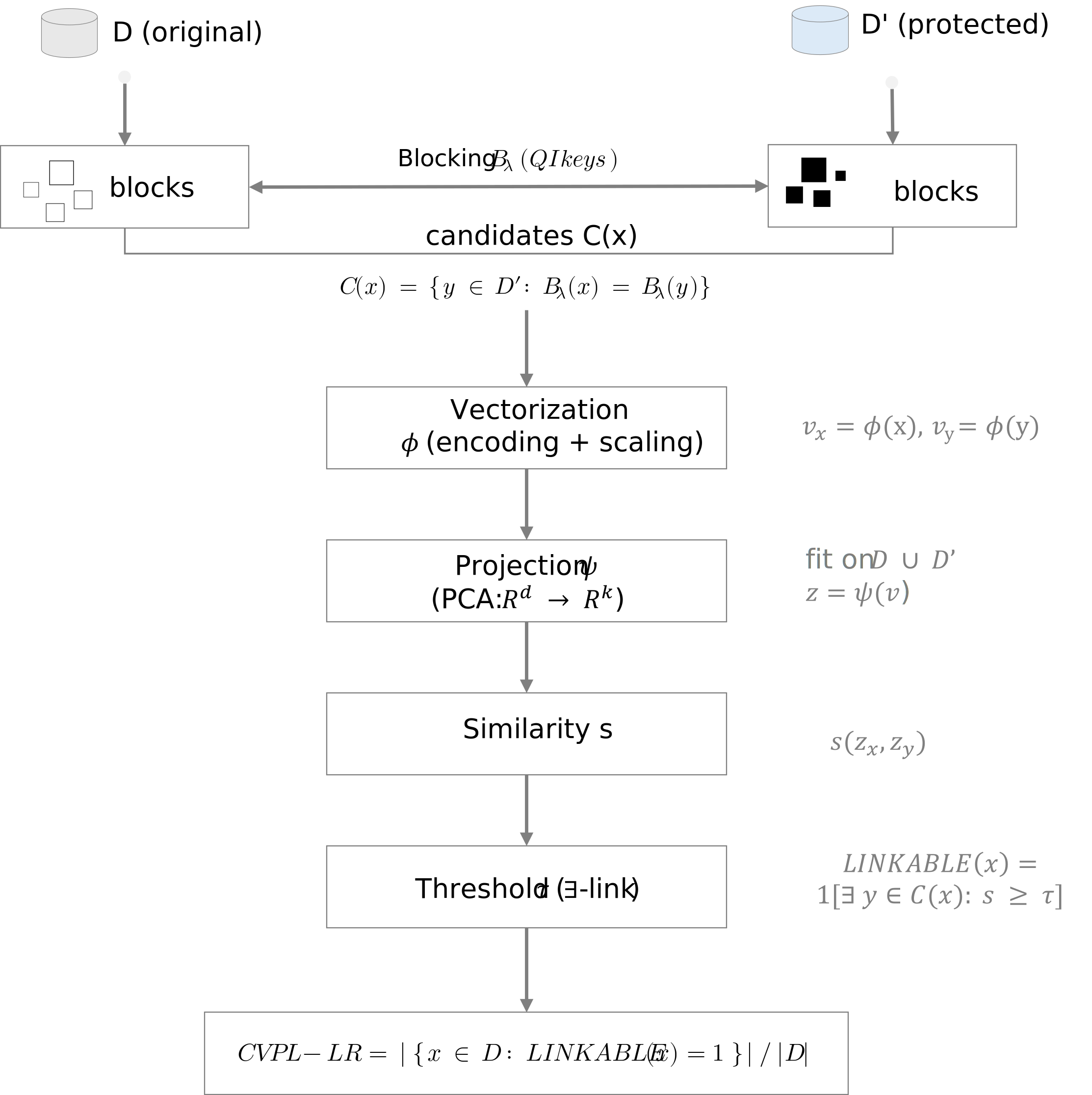}
    \caption{CVPL operator pipeline: blocking restricts candidates, vectorization and projection create embeddings, similarity scoring identifies potential links.}
    \label{fig:cvpl-pipeline}
\end{figure}

\subsection{Blocking Operator}\label{sec:blocking}

\paragraph{Definition.}
The blocking operator assigns each record to a block identifier based on generalized quasi-identifiers:
\begin{equation}\label{eq:blocking-def}
    \opB(x) = b \in \Bset,
\end{equation}
where $\Bset$ is a discrete set of block identifiers (\eg strings, integers, or tuples derived from quasi-identifier values). The cardinality $|\Bset|$ depends on blocking granularity $\lambda$: coarser blocking yields fewer, larger blocks (higher recall, lower precision), while finer blocking yields more, smaller blocks (lower recall, higher precision). In the extreme cases, $|\Bset| = 1$ places all records in a single block (full pairwise comparison), while $|\Bset| = |\Dor \cup \Dpr|$ isolates each record (no comparison).

A pair $(x, y)$ with $x \in \Dor$ and $y \in \Dpr$ is considered a candidate pair \emph{if and only if}:
\begin{equation}\label{eq:candidate-condition}
    \opB(x) = \opB(y).
\end{equation}

\paragraph{Structural recall.}
We define the blocking recall as:
\begin{equation}\label{eq:structural-recall}
    \Rblock = \Prob\bigl[\opB(x) = \opB(\ystar) \mid (x, \ystar) \in \Lstar\bigr],
\end{equation}
where $\Lstar$ denotes the set of true matching pairs. This quantity bounds the maximum true positive rate achievable by any linkage method that respects the blocking constraint, regardless of the choice of representation or similarity function.

\paragraph{Relationship to the protection mechanism.}
The blocking function $\opB$ may, but need not, coincide with the generalization function used by the protection mechanism. Three cases arise:
\begin{itemize}
    \item \textbf{Aligned blocking:} $\opB$ uses the same generalization as the protection (\eg identical age bins as k-anonymity). All true pairs are guaranteed to share a block, yielding structural recall $\Rblock = 1$.
    
    \item \textbf{Coarser blocking:} $\opB$ uses broader categories than the protection. True pairs remain in the same block ($\Rblock = 1$), but candidate pools are larger.
    
    \item \textbf{Finer blocking:} $\opB$ uses narrower categories than the protection. Some true pairs may be separated into different blocks ($\Rblock < 1$), but candidate pools are smaller.
\end{itemize}

\paragraph{Blocking as a structural risk constraint.}
Blocking is not merely computational optimization. It imposes a structural constraint on the attacker and defines an upper bound on the linkage that can be achieved by any similarity-based strategy operating under the same blocking scheme. Once records are separated into distinct blocks, no downstream representation or similarity function can recover links across blocks.

\paragraph{Implementation.}
In practice, blocking is typically implemented using generalized quasi-identifiers, for example: age binned into 5- or 10-year intervals, region mapped to hierarchical levels such as country, province, or city, and temporal attributes rounded to coarse intervals.

\subsection{Vectorization Operator}\label{sec:vectorization}

\paragraph{Definition.}
The vectorization operator maps each record to a numerical feature representation:
\begin{equation}\label{eq:vectorization-def}
    \opphi\colon x \mapsto \vx = (\varphi_1(x), \ldots, \varphi_d(x)).
\end{equation}
Vectorization refers to how an attacker represents records for comparison before performing similarity evaluation.

\paragraph{Requirements.}
The following requirements ensure meaningful and comparable representations:
\begin{itemize}
    \item Numerical features are standardized, for example, using z-score normalization.
    \item Categorical features are encoded using one-hot, frequency-based, or hierarchical encodings.
    \item Representation consistency is maintained between the original dataset $\Dor$ and the released dataset $\Dpr$.
\end{itemize}

CVPL is agnostic to the specific encoding choice. The key requirement is semantic consistency across datasets: the same attribute must correspond to the same semantic dimension in the vector space for both $\Dor$ and $\Dpr$. The vectorization operator is assumed to be fixed and does not rely on access to ground-truth linkages or hidden training mechanisms.

\subsection{Latent Projection Operator}\label{sec:projection}

\paragraph{Definition.}
The projection operator maps feature vectors to a lower-dimensional latent space:
\begin{equation}\label{eq:projection-def}
    \oppsi(\mathbf{v}) = \mathbf{z} \in \Reals^k.
\end{equation}
Projection reflects the attacker's ability to exploit correlations and shared structure in the data representation.

\paragraph{Baseline choice.}
The default projection is Principal Component Analysis (PCA), fitted on the combined set $\{\opphi(x) : x \in \Dor \cup \Dpr\}$. Joint fitting is used by default, as it captures the latent structure shared between the original and released datasets, which enables linkage.

\paragraph{Projection fitting strategies.}

\begin{table}[htbp]
\centering
\caption{Projection fitting strategies and their usage scenarios.}
\label{tab:projection-strategies}
\begin{tabular}{@{}lll@{}}
\toprule
\textbf{Strategy} & \textbf{Description} & \textbf{Usage} \\
\midrule
Joint ($\Dor \cup \Dpr$) & Fit on pooled data & Default: captures shared latent structure \\
Source-only ($\Dor$) & Fit on original data & When $\Dpr$ exhibits distributional shift \\
Target-only ($\Dpr$) & Fit on protected data & Rare, loses source structure \\
Pre-fitted & External projection & Cross-run reproducibility \\
\bottomrule
\end{tabular}
\end{table}

Empirical ablation results (\cref{sec:results}) show limited sensitivity to this choice for CVPL-LR estimates.

\paragraph{Purpose.}
Latent projection serves three primary purposes:
\begin{enumerate}
    \item \textbf{Noise suppression}, by removing low-variance dimensions.
    \item \textbf{Extraction of dominant correlational structure}, preserved by utility-driven protection mechanisms.
    \item \textbf{Improved separability} between matching and non-matching candidate pairs.
\end{enumerate}

\paragraph{Theoretical justification.}
Among all linear projections of fixed dimension $k$, PCA maximizes retained variance. Under Gaussian or elliptical assumptions, this corresponds to preserving maximal mutual information between the original and projected representations~\cite{cover2006elements}. As a result, PCA is well-suited for revealing structural invariants that remain after privacy-preserving transformations.

\paragraph{Key insight.}
Protection mechanisms designed to preserve utility tend to preserve covariance structure. These structural invariants are captured by $\oppsi$ and constitute the primary source of residual linkability detected by CVPL.

\subsection{Similarity Function}\label{sec:similarity}

\paragraph{Definition.}
Similarity is computed in latent space using cosine similarity:
\begin{equation}\label{eq:cosine-similarity}
    s(\zx, \zy) = \frac{\langle \zx, \zy \rangle}{\|\zx\| \cdot \|\zy\|}.
\end{equation}
Cosine similarity is used as the baseline. Alternative similarity functions are evaluated in ablation studies.

\subsection{Linkage Predicate}\label{sec:linkage-predicate}

Given a threshold $\optau$, we define the linkage predicate:
\begin{equation}\label{eq:linkage-predicate}
    \ell_\optau(x, y) = \Ind\bigl[s(\oppsi(\opphi(x)), \oppsi(\opphi(y))) \geq \optau\bigr].
\end{equation}

\paragraph{Threshold interpretation.}
The threshold $\optau \in [0, 1]$ controls linkage strictness. Values near 0 treat any candidate as a potential link (high recall, low precision); values near 1 require near-identical embeddings (low recall, high precision). Intermediate values (\eg $\optau = 0.25$) represent moderate similarity requirements. The operational interpretation depends on the similarity function $s$; see \cref{sec:thresholds} for empirical calibration.

A record $x \in \Dor$ is considered \textbf{linkable} if:
\begin{equation}\label{eq:linkable-condition}
    \exists\, y \in \Dpr \text{ such that } \opB(x) = \opB(y) \land \ell_\optau(x, y) = 1.
\end{equation}
This definition corresponds to existential linkage and reflects a worst-case privacy risk perspective.

\subsection{Algorithm}\label{sec:algorithm}

\begin{algorithm}[htbp]
\caption{CVPL Risk Estimation}
\label{alg:cvpl-estimation}
\begin{algorithmic}[1]
\Require Original dataset $\Dor$, protected dataset $\Dpr$, blocking parameters $\lambda$, threshold $\optau$, similarity function $s(\cdot, \cdot)$
\Ensure Linkage rate $\CVPLLR \in [0, 1]$
\Statex
\State \textbf{// Phase 1: Blocking Construction}
\State Construct blocking operator $\opB$ from parameters $\lambda$
\For{each $x \in \Dor \cup \Dpr$}
    \State Compute $\opB(x)$
\EndFor
\Statex
\State \textbf{// Phase 2: Vectorization}
\For{each $x \in \Dor \cup \Dpr$}
    \State Compute $\vx \gets \opphi(x)$
\EndFor
\Statex
\State \textbf{// Phase 3: Projection}
\State Fit $\oppsi$ on $\{\vx : x \in \Dor \cup \Dpr\}$ \Comment{e.g., PCA}
\For{each $x \in \Dor \cup \Dpr$}
    \State Compute $\zx \gets \oppsi(\vx)$
\EndFor
\Statex
\State \textbf{// Phase 4: Linkage Assessment}
\State $\texttt{linked\_count} \gets 0$
\For{each $x \in \Dor$}
    \State $C(x) \gets \{y \in \Dpr : \opB(x) = \opB(y)\}$ \Comment{candidate set}
    \For{each $y \in C(x)$}
        \If{$s(\zx, \zy) \geq \optau$}
            \State $\texttt{linked\_count} \gets \texttt{linked\_count} + 1$
            \State \textbf{break} \Comment{existential linkage: one match suffices}
        \EndIf
    \EndFor
\EndFor
\Statex
\State \textbf{// Phase 5: Output}
\State \Return $\CVPLLR \gets \texttt{linked\_count} / |\Dor|$
\end{algorithmic}
\end{algorithm}

\subsection{Computational Complexity}\label{sec:complexity}

\begin{table}[htbp]
\centering
\caption{Computational complexity of CVPL phases.}
\label{tab:complexity}
\begin{tabular}{@{}ll@{}}
\toprule
\textbf{Phase} & \textbf{Complexity} \\
\midrule
Vectorization & $O((n + m) \cdot d)$ \\
Projection (PCA) & $O((n + m) \cdot d^2 + d^3)$ \\
Similarity evaluation & $O(n \cdot \bar{B} \cdot k)$ \\
\textbf{Total} & $O((n + m) \cdot d^2 + n \cdot \bar{B} \cdot k)$ \\
\bottomrule
\end{tabular}
\end{table}

Here $\bar{B}$ denotes the average block size. Effective blocking ensures $\bar{B} \ll m$.

\subsection{Scalability Considerations}\label{sec:scalability}

For large-scale deployments ($n > 10^6$), direct evaluation of all candidate similarities becomes computationally impractical. CVPL addresses scalability through a combination of structural constraints and approximate evaluation strategies that preserve the attacker model while reducing computational cost.

\paragraph{Block size management.}
Blocking defines the primary scalability lever by limiting the candidate search space. However, real-world quasi-identifiers often induce heavy-tailed block size distributions. To address this, several complementary strategies can be applied.

\begin{table}[htbp]
\centering
\caption{Block size management strategies.}
\label{tab:block-management}
\begin{tabular}{@{}lll@{}}
\toprule
\textbf{Strategy} & \textbf{Description} & \textbf{Use Case} \\
\midrule
Cutoff & Limit block size & Fast, reduced recall \\
Sub-blocking & Secondary partitioning & Preferred \\
ANN & Approximate NN within blocks & Very large blocks \\
Sampling & Random subset & Probabilistic estimates \\
\bottomrule
\end{tabular}
\end{table}

\textbf{Note:} The heuristics in \cref{tab:block-management} may affect the monotonicity guarantee of \cref{thm:monotonicity}, which assumes exhaustive candidate evaluation. When approximations are used, CVPL-LR should be interpreted as a lower bound under the specified evaluation budget.

A practical deployment strategy includes monitoring block size distributions, applying sub-blocking above the 95th percentile, and using ANN methods such as FAISS~\cite{johnson2019billion} for blocks exceeding 10,000 records.

\paragraph{Performance optimizations.}
In addition to structural constraints, several implementation-level optimizations substantially reduce computational overhead without altering CVPL semantics:
\begin{itemize}
    \item \textbf{Chunked top-1 similarity evaluation.} Similarities are computed in fixed-size chunks (\eg 1024 vectors), avoiding materialization of the full similarity matrix. This reduces memory complexity from $O(N^2)$ to $O(N \cdot \text{chunk})$.
    
    \item \textbf{Constant-time index lookup.} Hash-based local index mapping replaces linear searches within blocks, reducing candidate lookup from $O(|B|)$ to $O(1)$.
    
    \item \textbf{Sampled similarity distribution.} For non-matching pairs, similarity distributions are estimated using fixed-size samples (\eg 100 samples per record), preserving distributional shape while avoiding quadratic storage.
    
    \item \textbf{Threshold reuse.} Similarities are computed once and re-thresholded across multiple $\optau$ values, enabling efficient computation of risk surfaces and yielding up to $30\times$ speedup compared to recomputation.
\end{itemize}

\paragraph{Empirical scaling.}
On commodity hardware (32 GB RAM, 8 CPU cores), the following performance improvements were observed:

\begin{table}[htbp]
\centering
\caption{Performance improvements with optimizations.}
\label{tab:performance}
\begin{tabular}{@{}rrrr@{}}
\toprule
\textbf{Records} & \textbf{Time (baseline)} & \textbf{Time (optimized)} & \textbf{Speedup} \\
\midrule
1,000 & ${\sim}110$\,s & ${\sim}11$\,s & $10\times$ \\
5,000 & ${\sim}550$\,s & ${\sim}51$\,s & $10\times$ \\
10,000 & ${\sim}1{,}100$\,s & ${\sim}229$\,s & ${\sim}5\times$ \\
\bottomrule
\end{tabular}
\end{table}

These results demonstrate that CVPL scales to millions of records under realistic deployment conditions while preserving the intended threat model and linkage semantics.

\subsection{Relationship to Fellegi--Sunter}\label{sec:fellegi-sunter}

\begin{proposition}[Fellegi--Sunter as a Special Case of CVPL]\label{prop:fs-special-case}
Fellegi--Sunter (FS) probabilistic record linkage~\cite{fellegi1969theory} can be expressed as a special case of CVPL under the following restrictive assumptions:
\begin{enumerate}
    \item All features are binary or categorical.
    \item The projection operator is the identity, $\oppsi = I$.
    \item Feature independence is assumed.
    \item Similarity is defined as a log-likelihood ratio.
\end{enumerate}
\end{proposition}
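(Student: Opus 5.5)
The plan is constructive: under the four assumptions, I will choose each CVPL operator in \cref{eq:cvpl-composition} so that the composition reproduces, pair by pair, the Fellegi--Sunter decision rule. I first recall the FS setup. For a candidate pair $(x,y)$ compared on attributes $1,\dots,d$, the agreement vector is $\gamma(x,y)=(\gamma_1,\dots,\gamma_d)\in\{0,1\}^d$. The parameters are $m_j=\Prob[\gamma_j=1\mid M]$ and $u_j=\Prob[\gamma_j=1\mid U]$. FS scores a pair by the likelihood ratio $\log\bigl(\Prob[\gamma\mid M]/\Prob[\gamma\mid U]\bigr)$ and declares a link when this exceeds an upper threshold $T_\mu$. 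Pairs are restricted to a blocking scheme, which is standard FS practice.

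The construction proceeds as follows.
\begin{enumerate}
\item \emph{Blocking.} Take $\opB$ to be the FS blocking key, so that $C(x)$ in \cref{eq:candidate-set} is exactly the FS comparison space.
\item \emph{Vectorization.} Using assumption~1, take $\opphi$ to be the one-hot encoding of each categorical attribute. Then $\gamma_j(x,y)=\langle \opphi_j(x),\opphi_j(y)\rangle$ for each attribute block $j$.
\item \emph{Projection.} By assumption~2, set $\oppsi=I$.
\item \emph{Similarity.} Define
\[
s(\mathbf{v}_x,\mathbf{v}_y)=\sum_{j=1}^d\Bigl[\gamma_j\log\tfrac{m_j}{u_j}+(1-\gamma_j)\log\tfrac{1-m_j}{1-u_j}\Bigr],
\]
where each $\gamma_j$ is computed from the vectors as in the previous step.
\end{enumerate}
Assumption~3 factorizes the likelihoods, $\Prob[\gamma\mid M]=\prod_j m_j^{\gamma_j}(1-m_j)^{1-\gamma_j}$, and similarly for $U$. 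Hence $s$ equals the FS log-likelihood-ratio weight exactly, which is assumption~4. Setting $\optau=T_\mu$ makes the linkage predicate $\ell_\optau$ of \cref{eq:linkage-predicate} coincide with the FS ``link'' decision on every candidate pair.

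The key step, and the main obstacle, is reconciling $s$ with the signature and range conventions the paper imposes. \Cref{sec:operator-decomposition} only requires $s\colon\Reals^k\times\Reals^k\to\Reals$, so the unbounded FS weight fits that definition. However, \cref{sec:linkage-semantics} and \cref{sec:linkage-predicate} assume $s\in[0,1]$. I plan to handle this by composing with a strictly increasing map, for instance the logistic function $\sigma(w)=1/(1+e^{-w})$, and using the threshold $\sigma(T_\mu)$. Because $\sigma$ is strictly increasing, $\Ind[\sigma(s)\ge\sigma(T_\mu)]=\Ind[s\ge T_\mu]$, so every thresholded decision and \cref{eq:linkable-indicator} are unchanged.

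I also need to note that $s$ depends on the pair only through the coordinatewise agreements of the two embeddings. Coordinatewise agreement is a legitimate function of $(\mathbf{z}_x,\mathbf{z}_y)$, though it is not cosine similarity; the proposition explicitly replaces the baseline similarity, so this is admissible.

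Finally, I address the correspondence of outputs. FS has a three-way decision (link, possible link, non-link) with thresholds $T_\lambda<T_\mu$. Taking $\optau=T_\mu$ recovers the FS link set. Sweeping $\optau$ over the interval $[T_\lambda,T_\mu]$ traces the clerical-review region, which corresponds to a slice of the risk surface $R(\lambda,\tau)$. Under existential semantics, the fraction of records with at least one FS link is then exactly $\CVPLLR(T_\mu)$.

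I will close by remarking where the reduction uses each assumption:
\begin{itemize}
\item binarity/categorical features give the agreement patterns;
\item the identity projection prevents mixing of attributes;
\item independence gives the additive weight;
\item the LLR similarity fixes the scoring rule.
\end{itemize}
Dropping assumption~3 while keeping the additive weight makes the CVPL score misspecified. This motivates the over-linking bias discussed later in the paper.
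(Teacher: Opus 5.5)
Your proposal is correct and follows the same route as the paper's proof sketch: conditional independence makes the Fellegi--Sunter log-likelihood ratio the additive weight $\sum_j w_j\gamma_j + c$. With $\psi = I$ and one-hot encoding, $\gamma_j = \langle\varphi_j(x),\varphi_j(y)\rangle$, so your agreement-based $s = \sum_j w_j\langle\varphi_j(x),\varphi_j(y)\rangle + c$ is exactly the paper's weighted dot product, and your extra steps (matching the blocking key, setting $\tau = T_\mu$, and the strictly increasing logistic rescaling to reconcile with $s\in[0,1]$) only fill in details the sketch leaves implicit.
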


Under these conditions, the CVPL similarity score reduces to the classical Fellegi--Sunter composite weight.

\begin{proof}[Proof sketch]
Let $\gamma \in \{0, 1\}^d$ denote the binary agreement vector between records $x$ and $y$, where $\gamma_i = 1$ indicates agreement on feature $i$. The Fellegi--Sunter likelihood ratio is given by:
\begin{equation}\label{eq:fs-likelihood}
    \log R(\gamma) = \sum_i \gamma_i \log\frac{m_i}{u_i} + \sum_i (1 - \gamma_i) \log\frac{1 - m_i}{1 - u_i} = \sum_i w_i \gamma_i + c,
\end{equation}
where $m_i$ and $u_i$ denote match and non-match agreement probabilities, respectively.

With identity projection ($\oppsi = I$), binary feature encoding, and a weighted dot-product similarity function, the CVPL score coincides with the FS composite weight.
\end{proof}

CVPL generalizes this framework by supporting continuous features, capturing dependencies through latent projection, and avoiding independence assumptions.

\begin{proposition}[Systematic Bias of Fellegi--Sunter under Representation Shift]\label{prop:fs-bias}
When protected data introduces generalization, suppression, or perturbation on a subset of attributes, the Fellegi--Sunter linkage score becomes systematically biased due to residual correlations in non-generalized attributes. This leads to consistent over-linking.
\end{proposition}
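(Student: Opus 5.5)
The plan is to make the statement precise and then show that the Fellegi--Sunter composite weight, evaluated with parameters that ignore dependence, overstates the true match log-likelihood ratio. The starting point is the decomposition in \eqref{eq:fs-likelihood}.

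\textbf{Setup.} Split the attribute indices into a generalized/perturbed subset $G$ and a non-generalized subset $N$. Let $\gamma = (\gamma_G, \gamma_N)$ denote the agreement vector. FS scores a pair by
\[
  W(\gamma) = \sum_{i \in G} w_i(\gamma_i) + \sum_{i \in N} w_i(\gamma_i),
\]
with marginal weights $w_i = \log(m_i/u_i)$ on agreement and $\log\bigl((1-m_i)/(1-u_i)\bigr)$ on disagreement.

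The optimal score is the true log-ratio $\log\bigl(P(\gamma \mid M)/P(\gamma \mid U)\bigr)$. By the chain rule, the gap between $W$ and this optimum is
\[
  \Delta(\gamma) = \log\frac{\prod_i P(\gamma_i \mid U)}{P(\gamma \mid U)} - \log\frac{\prod_i P(\gamma_i \mid M)}{P(\gamma \mid M)}.
\]
So everything reduces to the dependence structure of agreements among matches versus among non-matches.

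\textbf{Step 1: agreement on generalized attributes is uninformative.} Generalization (aligned blocking, \cref{sec:blocking}) makes every candidate within a block agree on $G$. Hence $u_i \to m_i \approx 1$ for $i \in G$, and these terms carry near-zero discriminative weight, while still being counted additively. The entire decision is therefore shifted onto $N$.

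\textbf{Step 2: show $\Delta$ is positive on non-matches.} Because utility-preserving protection retains correlations among the $N$ attributes, agreements on correlated attributes co-occur among non-matches within a block. Records in the same block share cluster structure, so for non-matches $P(\gamma_N = \mathbf 1 \mid U) > \prod_{i \in N} u_i$.

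I will model this with an exchangeable latent-cluster model: conditional on a latent cluster variable, agreements are independent, which yields positive association across attributes. Among matches, agreement is already near-deterministic, so the corresponding correction in the $M$ term is small. Plugging into $\Delta$, the expected excess weight for a non-match is
\[
  \Exp[\Delta \mid U] = \mathrm{TC}_U(\gamma_N) - \mathrm{TC}_M(\gamma_N) + (\text{cross terms}),
\]
where $\mathrm{TC}$ denotes total correlation, i.e.\ multi-information. This quantity is positive whenever the non-match dependence exceeds the match dependence. It is the step I expect to be the main obstacle: a clean sign requires an assumption such as positive association (e.g.\ FKG) of agreements under $U$ and conditional independence or weaker dependence under $M$. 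I would state that assumption explicitly as the precise meaning of ``residual correlations''.

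\textbf{Step 3: conclude over-linking.} Since $W$ is inflated by a positive quantity on non-matching pairs, the existential predicate $\exists\, y : W \ge \tau$ of \eqref{eq:existential-linkage} fires for a stochastically larger set of candidates. The linkage rate therefore exceeds that of the correctly specified likelihood-ratio test at the same threshold, and precision drops. The bias is ``systematic'' because $\Delta$ does not vanish as the sample grows: EM-estimated $m_i, u_i$ converge to the marginals, not to the joint.

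The argument is a sketch under a stated dependence model, consistent with the paper's proposition style. I would close by pointing to the empirical over-linking reported in \cref{sec:results} as corroboration.
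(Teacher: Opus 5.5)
The paper gives no derivation for this proposition. It offers only an interpretive argument: protection suppresses agreement on some attributes while preserving correlated structure in others, so FS overestimates match likelihood from the remaining agreements. It then cites FS-LR $=93\%$ with Precision@1 $=13\%$. Your misspecification-gap formalization is therefore a genuinely different and more ambitious route, and Steps 1--2 capture the right mechanism, but the bookkeeping is off. With $T(\gamma)=\log\bigl(P(\gamma\mid M)/P(\gamma\mid U)\bigr)$ one has $W-T=\log\bigl(\prod_i P(\gamma_i\mid M)/P(\gamma\mid M)\bigr)-\log\bigl(\prod_i P(\gamma_i\mid U)/P(\gamma\mid U)\bigr)$. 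So your $\Delta$ equals $T-W$, and ``$\Delta>0$ on non-matches'' would say FS \emph{under}-scores. Also, in $\Exp[W-T\mid U]$ the $M$-correction is averaged under $U$, not $M$, so it is not $-\mathrm{TC}_M$ and has no sign in general. The clean statement is this: if agreements are conditionally independent under $M$ (independent perturbation noise, or $N$ untouched so matches agree on all of $N$), then $W-T=\log\bigl(P(\gamma\mid U)/\prod_i P(\gamma_i\mid U)\bigr)$ exactly. Hence $\Exp[W-T\mid U]=\mathrm{TC}_U(\gamma)\ge 0$ by nonnegativity of KL divergence, with no FKG assumption needed.

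The genuine gap is Step 3. A positive \emph{average} inflation does not make $W$ inflated pointwise on non-matches, and over-linking compares the sets $\{W\ge\tau\}$ and $\{T\ge\tau\}$, not averages. By the identity above, $W>T$ exactly on patterns over-represented among non-matches relative to independence. Positive association gives this for the all-agree (and all-disagree) patterns, but since both laws sum to one, some partial-agreement patterns are then under-represented and get $W<T$. Concretely, take two attributes with $m_1=m_2=0.9$ independent under $M$, $u_1=u_2=0.3$, and $P(1,1\mid U)=0.2$. Then $W(1,0)=\log(3/7)<T(1,0)=\log 0.9$, and for every $\tau\in(\log(3/7),\log 0.9]$ FS links a non-match with probability $0.2$, while the correctly specified test does so with probability $0.4$. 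So ``consistent over-linking'' needs an extra hypothesis, e.g.\ $W\ge T$ on $\{T\ge\tau\}$, which yields $\{T\ge\tau\}\subseteq\{W\ge\tau\}$.

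That hypothesis does hold in the pure-generalization case. If matches agree on all of $N$, every pattern with a disagreement on $N$ scores $-\infty$ under both $W$ and $T$, and on $\gamma_N=\mathbf{1}$ association gives $W\ge T$. So the inclusion holds for every $\tau$; this covers $k$-anonymity but not perturbation.

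Finally, drop the EM justification of ``systematic''. Under misspecification the likelihood fit converges to the KL-projection onto the conditional-independence latent-class family. Its fitted classes and marginals generally differ from the true $M$/$U$ marginals; for instance, a tightly agreeing cluster of non-matches can be absorbed into the fitted match class. Argue with oracle $m_i,u_i$ instead: the bias is then a fixed functional of the population law, and systematicity is immediate.
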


\paragraph{Interpretation.}
Fellegi--Sunter is calibrated under the assumption that observed agreement patterns arise from true matches. In protected datasets, however, utility-preserving mechanisms often reduce agreement on selected attributes (\eg via generalization) while preserving correlated structure in others. As a result, FS overestimates match likelihood based on the remaining agreements, despite the absence of identity correspondence.

CVPL avoids this failure mode by operating in a latent space that captures protection-induced structural invariants directly, rather than relying on per-attribute agreement counts.

\paragraph{Empirical confirmation.}
At a scale of 10,000 records, we observe:
\begin{itemize}
    \item \textbf{Fellegi--Sunter:} FS-LR = 93\%, Precision@1 = 13\%
    \item \textbf{CVPL:} CVPL-LR = 26\%, Precision@1 = 34\%
\end{itemize}
These results confirm that classical FS linkage substantially over-links under representation shift, while CVPL provides a more reliable estimate of residual linkage risk (see \cref{sec:results}).

\subsection{Progressive Blocking and Monotonicity}\label{sec:progressive-blocking}

\begin{definition}[Blocking Relaxation]\label{def:blocking-relaxation}
A blocking scheme $B_2$ is a \emph{relaxation} of $B_1$ (denoted $B_1 \preceq B_2$) if it is no more restrictive than $B_1$, \ie every pair that matches under $B_1$ also matches under $B_2$:
\begin{equation}\label{eq:relaxation-def}
    \forall x, y\colon B_1(x) = B_1(y) \Rightarrow B_2(x) = B_2(y).
\end{equation}
\end{definition}

Equivalently, $B_2$ can be obtained by \textbf{dropping constraints} or \textbf{merging blocks} of $B_1$. Informally, relaxation increases candidate sets and therefore can only increase (or preserve) existential linkability.

For any record $x \in \Dor$, define the candidate set under blocking $B$:
\begin{equation}\label{eq:candidate-under-B}
    \Bcand{x} = \{y \in \Dpr : B(x) = B(y)\}.
\end{equation}
Then $B_1 \preceq B_2 \Rightarrow C_{B_1}(x) \subseteq C_{B_2}(x)$ for all $x$.

\begin{theorem}[Monotonicity under Relaxation]\label{thm:monotonicity}
Let $B_1 \preceq B_2$ be blocking relaxation (\ie $B_2$ relaxes $B_1$). Then, for all thresholds $\optau$:
\begin{equation}\label{eq:monotonicity}
    \CVPLLR(B_1, \optau) \leq \CVPLLR(B_2, \optau).
\end{equation}
\end{theorem}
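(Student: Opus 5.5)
The plan is a pointwise set-inclusion argument: show each record's linkability indicator can only increase under relaxation, then average over $\Dor$. The essential prerequisite is that the vectorization $\opphi$, projection $\oppsi$ and similarity $s$ do not depend on the blocking scheme. In \cref{alg:cvpl-estimation}, $\oppsi$ is fit on $\{\vx : x\in \Dor\cup\Dpr\}$ and never on blocks, so the embeddings $\zx$ and the scores $s(\zx,\zy)$ are identical under $B_1$ and $B_2$. I will state this explicitly as the standing convention (exhaustive evaluation within blocks, no block-dependent fitting or sampling, as in the Note after \cref{tab:block-management}). Under it, the only object that changes between $B_1$ and $B_2$ is the candidate set.

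First, I verify the inclusion $C_{B_1}(x)\subseteq C_{B_2}(x)$ for every $x\in\Dor$, which is stated just before the theorem. Take $y\in C_{B_1}(x)$. Then $B_1(x)=B_1(y)$, and by \eqref{eq:relaxation-def} this gives $B_2(x)=B_2(y)$, so $y\in C_{B_2}(x)$.

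Second, I fix $\optau$ and let $T_\optau(x)=\{y\in\Dpr : s(\zx,\zy)\ge\optau\}$, a set that does not depend on the blocking. By \eqref{eq:linkable-indicator}, $\mathrm{LINKABLE}_{B}(x)=\Ind[C_B(x)\cap T_\optau(x)\neq\emptyset]$. The inclusion gives $C_{B_1}(x)\cap T_\optau(x)\subseteq C_{B_2}(x)\cap T_\optau(x)$. Nonemptiness is preserved under enlarging a set, so $\mathrm{LINKABLE}_{B_1}(x)\le \mathrm{LINKABLE}_{B_2}(x)$ for every $x$.

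Third, I sum this inequality over $x\in\Dor$ and divide by $|\Dor|$ using \eqref{eq:cvpl-lr-def}. This gives \eqref{eq:monotonicity}; equivalently, the event inside the probability in \eqref{eq:cvpl-lr} under $B_1$ is contained in the corresponding event under $B_2$. Since $\optau$ was arbitrary, the inequality holds for all thresholds.

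The main obstacle is not the combinatorics but the independence assumption from the first paragraph. If the projection were fit per block, or if the approximations of \cref{tab:block-management} such as cutoffs, sampling or ANN were used, then $T_\optau(x)$ could change with the blocking and monotonicity could fail. I will therefore make that assumption explicit in the hypotheses rather than try to prove monotonicity in that more general setting.
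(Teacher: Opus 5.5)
Your proof is correct and follows the paper's argument: the inclusion $C_{B_1}(x)\subseteq C_{B_2}(x)$ yields a pointwise inequality between the existential linkage indicators, because the same witness $y$ still qualifies under $B_2$, and averaging over $\Dor$ then gives the result. Your explicit hypothesis that $\opphi$, $\oppsi$ and $s$ do not depend on the blocking (with exhaustive within-block evaluation) is left implicit in the paper's proof, which writes $s(x,y)$ with no dependence on $B$ and defers the approximation caveat to the Note after the theorem.
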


\begin{proof}
Fix any $x \in \Dor$. Let $C_1(x) = C_{B_1}(x)$ and $C_2(x) = C_{B_2}(x)$. By relaxation, $C_1(x) \subseteq C_2(x)$.

Define the existential linkage indicator under $B_i$ as:
\begin{equation}
    L_i(x) = \Ind[\exists\, y \in C_i(x) : s(x, y) \geq \optau].
\end{equation}
Since $C_1(x) \subseteq C_2(x)$, if $L_1(x) = 1$, then the same witness $y$ belongs to $C_2(x)$, hence $L_2(x) = 1$. Therefore $L_1(x) \leq L_2(x)$ pointwise for all $x$.

Averaging over $\Dor$:
\begin{equation}
    \CVPLLR(B_i, \optau) = \frac{1}{|\Dor|} \sum_{x \in \Dor} L_i(x),
\end{equation}
so $\sum_x L_1(x) \leq \sum_x L_2(x)$, and thus $\CVPLLR(B_1, \optau) \leq \CVPLLR(B_2, \optau)$.
\end{proof}

\textbf{Note:} \Cref{thm:monotonicity} assumes exhaustive evaluation over candidate sets $C(x)$ under nested blocking refinements. When scalability heuristics are used (\eg truncation, sampling, or ANN search), strict monotonicity may not hold; however, early-stage estimates remain empirically stable in our experiments and can be treated as conservative lower bounds under the chosen evaluation budget.

\begin{corollary}[Anytime Lower Bound]\label{cor:anytime}
Let $B_1 \preceq B_2 \preceq \cdots \preceq B_k$ be a progressive relaxation ladder (coarse-to-broad matching). Then the sequence $R_i = \CVPLLR(B_i, \optau)$ is non-decreasing. Stopping early yields a valid \textbf{lower bound} on the final risk estimate:
\begin{equation}
    R_1 \leq R_2 \leq \cdots \leq R_k, \quad\text{and}\quad R_i \leq R_k \;\;\forall\, i < k.
\end{equation}
\end{corollary}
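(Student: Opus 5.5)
The plan is to derive \cref{cor:anytime} directly from \cref{thm:monotonicity}, applied to consecutive rungs of the ladder, and then close the argument by induction. The only structural fact needed beyond the theorem is that the relaxation relation $\preceq$ of \cref{def:blocking-relaxation} is transitive. This follows immediately from \eqref{eq:relaxation-def}. If $B_1(x)=B_1(y)\Rightarrow B_2(x)=B_2(y)$ and $B_2(x)=B_2(y)\Rightarrow B_3(x)=B_3(y)$, then chaining the implications gives $B_1(x)=B_1(y)\Rightarrow B_3(x)=B_3(y)$, so $B_1\preceq B_3$. Strictly speaking, transitivity is not even required, because the inequalities can be chained instead; I will still record it, since it also shows $B_i\preceq B_k$ directly.

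The first step fixes $\optau$ and applies \cref{thm:monotonicity} to each adjacent pair $B_i\preceq B_{i+1}$, for $i=1,\dots,k-1$. This yields $R_i=\CVPLLR(B_i,\optau)\le \CVPLLR(B_{i+1},\optau)=R_{i+1}$, so the sequence $(R_i)$ is non-decreasing. The second step obtains $R_i\le R_k$ for every $i<k$. One route is induction on $k-i$ using transitivity of $\le$ on $[0,1]$. The other is to apply the theorem once to the pair $B_i\preceq B_k$, which is valid by transitivity of $\preceq$. Either way, $R_i$ is a lower bound on the final estimate $R_k$, and that is exactly the anytime property.

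I expect no genuine technical obstacle. The one point needing care is interpretive. The similarity $s$, the vectorization $\opphi$ and the projection $\oppsi$ must be held fixed across all rungs of the ladder, so that only the candidate sets change. This matters because if $\oppsi$ were refit per rung, the pointwise comparison $L_i(x)\le L_{i+1}(x)$ used inside \cref{thm:monotonicity} could fail. I will state this assumption explicitly, along with the exhaustive-evaluation caveat from the note after the theorem. With those assumptions stated, the corollary is an immediate consequence.
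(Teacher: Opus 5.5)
Your proposal is correct and takes essentially the paper's route: the paper states the corollary without a separate proof, as an immediate consequence of \cref{thm:monotonicity} applied to each adjacent pair $B_i \preceq B_{i+1}$ and chained to give $R_i \le R_k$. Your caveat that $\opphi$, $\oppsi$ and $s$ must stay fixed across rungs is already built into the paper's setup, since $\oppsi$ is fit once on $\Dor \cup \Dpr$ regardless of the blocking.
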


\paragraph{Interpretation.}
Progressive relaxation provides a controllable ``attacker strength'' ladder: each step broadens candidate sets (more comparisons, more opportunities to find a match), so existential linkage risk can only increase.

\paragraph{Practical implications.}
Progressive relaxation enables:
\begin{enumerate}
    \item \textbf{Fast screening:} start with a restrictive blocking (small candidate sets) to obtain a quick \textbf{lower bound}.
    \item \textbf{Attacker-strength calibration:} interpret each relaxation step as a stronger adversary (more linking power).
    \item \textbf{Resource budgeting:} trade computation time for tightness of the estimate (approach the final $R_k$).
    \item \textbf{Convergence monitoring:} track the incremental gain $\Delta_i = R_i - R_{i-1}$ and stop when it becomes negligible.
\end{enumerate}

\begin{algorithm}[htbp]
\caption{Progressive CVPL Evaluation}
\label{alg:progressive-cvpl}
\begin{algorithmic}[1]
\Require $\Dor$, $\Dpr$, relaxation ladder $B_1 \preceq B_2 \preceq \cdots \preceq B_k$, threshold $\optau$, convergence tolerance $\varepsilon$
\Ensure CVPL-LR sequence with convergence-from-below guarantee
\Statex
\State $R_{\text{prev}} \gets 0$
\For{$i = 1$ \textbf{to} $k$}
    \State $R_i \gets \CVPLLR(\Dor, \Dpr, B_i, \optau)$
    \State $\Delta \gets R_i - R_{\text{prev}}$
    \If{$\Delta < \varepsilon$}
        \State \Return $R_i$ \Comment{converged}
    \EndIf
    \State $R_{\text{prev}} \gets R_i$
\EndFor
\State \Return $R_k$
\end{algorithmic}
\end{algorithm}


\section{Evaluation Protocol}\label{sec:protocol}

\subsection{Ground Truth}\label{sec:ground-truth}

Let $\mathrm{id}(x)$ identify a hidden identifier associated with the record $x$. The true linkage relation is defined as:
\begin{equation}\label{eq:true-linkage}
    \Lstar = \{(x, y) \in \Dor \times \Dpr : \mathrm{id}(x) = \mathrm{id}(y)\}.
\end{equation}
Ground truth is used exclusively for evaluation purposes. The CVPL algorithm itself never accesses hidden identifiers and operates solely on the released datasets $\Dor$ and $\Dpr$.

\subsection{Ground Truth Availability Scenarios}\label{sec:gt-scenarios}

\begin{table}[htbp]
\centering
\caption{Ground truth availability scenarios and their implications for CVPL evaluation.}
\label{tab:gt-scenarios}
\begin{tabular}{@{}llll@{}}
\toprule
\textbf{Scenario} & \textbf{Ground Truth} & \textbf{CVPL Executable} & \textbf{Validation Possible} \\
\midrule
Simulation & Full & Yes & Yes \\
Controlled release & Full & Yes & Yes \\
Third-party audit & Partial or none & Yes & Limited \\
Blind audit & None & Yes & No \\
\bottomrule
\end{tabular}
\end{table}

\paragraph{Implication.}
CVPL can be executed without ground truth in all scenarios. However, quantitative validation of linkage risk estimates requires some form of ground truth. In blind audit scenarios, CVPL provides risk estimates that cannot be directly verified but remain meaningful for relative comparison between protection mechanisms.

\subsection{Self-Linkage as Diagnostic}\label{sec:self-linkage}

When ground truth is unavailable, \textbf{self-linkage} can be used as a diagnostic tool. Let $\Dpr$ be obtained from $\Dor$ through controlled perturbation. The self-linkage rate is defined as:
\begin{equation}\label{eq:self-linkage}
    \text{Self-LR} = \CVPLLR(\Dor, \Dpr).
\end{equation}
A high self-linkage rate (approaching 1) indicates sufficient discriminative power of the feature representation and similarity pipeline. Conversely, low self-linkage suggests that the chosen features, encodings, or projections are insufficiently informative for linkage assessment.

\subsection{CVPL-Linkage Rate}\label{sec:cvpl-lr-metric}

\paragraph{Primary metric.}
For a fixed similarity threshold $\optau$, the CVPL linkage rate is:
\begin{equation}\label{eq:cvpl-lr-metric}
    \CVPLLR(\optau) = \frac{1}{|\Dor|} \sum_{x \in \Dor} \Ind\bigl[\exists\, y \in \mathcal{C}(x) : s(x, y) \geq \optau\bigr],
\end{equation}
where $\mathcal{C}(x) = \{y \in \Dpr : \opB(x) = \opB(y)\}$ is the candidate set induced by blocking. The score $s(x, y)$ denotes the similarity computed after the full CVPL pipeline, \ie on the projected representations $\oppsi(\opphi(x))$ and $\oppsi(\opphi(y))$.

\paragraph{Interpretation.}
$\CVPLLR(\optau)$ is the fraction of original records that admit at least one \emph{plausible} link in the protected dataset under the assumed attacker model. It measures \textbf{linkage feasibility} (existential linkability): that is, the risk that an attacker could plausibly claim a link, rather than de-anonymization accuracy.

\paragraph{Semantic clarification.}
$\CVPLLR$ induces a threshold-indexed family of set functions over subsets $A \subseteq \Dor$:
\begin{equation}\label{eq:set-function}
    \mu_\optau(A) = \frac{|\{x \in A : \exists\, y \in \Dpr,\, s(x, y) \geq \optau\}|}{|\Dor|}, \quad \mu_\optau(\Dor) = \CVPLLR(\optau).
\end{equation}
This enables ordering and comparison of protection mechanisms across attacker strictness levels.

\subsection{Auxiliary Metrics}\label{sec:auxiliary-metrics}

CVPL-LR measures \textbf{linkage feasibility}. When ground truth is available for evaluation purposes only, we additionally report auxiliary metrics that separate correct from spurious linkability and support threshold calibration and method comparison.

\paragraph{True Link Rate (TLR).}
The probability that the true counterpart exceeds the similarity threshold, \emph{conditional on not being eliminated by blocking}:
\begin{equation}\label{eq:tlr}
    \TLR(\optau) = \Prob\bigl[s(x, \ystar) \geq \optau \mid (x, \ystar) \in \Lstar,\, \opB(x) = \opB(\ystar)\bigr].
\end{equation}
TLR measures the sensitivity of the similarity stage within admissible candidate sets and isolates the effect of similarity scoring from blocking-induced constraints.

\paragraph{False Link Rate (FLR).}
The probability that at least one \emph{incorrect} candidate exceeds the threshold within the same block:
\begin{equation}\label{eq:flr}
    \FLR(\optau) = \Prob\bigl[\exists\, y \neq \ystar : s(x, y) \geq \optau \mid \opB(x) = \opB(y)\bigr].
\end{equation}
FLR captures the propensity of the linkage mechanism to produce \emph{spurious but plausible} links, which directly contributes to existential linkage risk.

\paragraph{Top-1 Precision (Precision@1).}
The probability that the highest-similarity candidate corresponds to the true match:
\begin{equation}\label{eq:prec-at-1}
    \PrecOne = \Prob\bigl[\argmax_{y \in \Dpr} s(x, y) = \ystar \mid (x, \ystar) \in \Lstar\bigr].
\end{equation}
Precision@1 measures \textbf{de-anonymization accuracy} rather than feasibility and is reported separately from CVPL-LR.

\paragraph{Note.}
TLR, FLR, and Precision@1 are \textbf{evaluation metrics} that require access to the true linkage relation $\Lstar$. They are \textbf{not observable} to the attacker and are \textbf{not assumed} under the CVPL threat model.

\subsection{Recall Decomposition}\label{sec:recall-decomposition}

To attribute linkage success (or failure) transparently, we decompose total recall into a \textbf{structural component} induced by blocking and a \textbf{matching component} induced by similarity-based decisions. This separation enables honest attribution of errors to threat-model constraints rather than to the similarity mechanism itself.

\paragraph{Blocking recall.}
The fraction of true linked pairs that are \emph{not eliminated} by the blocking constraint:
\begin{equation}\label{eq:blocking-recall}
    \Rblock = \Prob\bigl[\opB(x) = \opB(\ystar) \mid (x, \ystar) \in \Lstar\bigr].
\end{equation}
Blocking recall captures the \emph{structural ceiling} on achievable linkage under the assumed attacker model: no similarity-based method can recover links excluded by blocking.

\paragraph{Within-block matching recall.}
The probability that the true pair exceeds the similarity threshold, \emph{conditional on being in the same block}:
\begin{equation}\label{eq:matching-recall}
    R_{\text{match}} = \Prob\bigl[s(x, \ystar) \geq \optau \mid (x, \ystar) \in \Lstar,\, \opB(x) = \opB(\ystar)\bigr].
\end{equation}
This term isolates the effectiveness of the similarity and projection stages independently of blocking constraints.

\paragraph{Total recall.}
The overall recall decomposes multiplicatively as:
\begin{equation}\label{eq:total-recall}
    R_{\text{total}} = \Rblock \cdot R_{\text{match}}.
\end{equation}

\paragraph{Interpretation.}
This decomposition separates \textbf{structural constraints} imposed by the threat model (blocking) from \textbf{similarity-based detection} (matching). It prevents misattribution of linkage failure to the similarity mechanism when it is, in fact, induced by coarse blocking or conservative attacker assumptions.

Low total recall may arise from intentionally restrictive blocking, even when similarity scoring is highly discriminative, highlighting the importance of analyzing both components jointly.

\subsection{Baseline Methods}\label{sec:baselines}

We compare CVPL against established and diagnostic baseline methods. Unless stated otherwise, all baselines are evaluated under the same blocking and candidate-generation constraints to ensure comparability of linkage feasibility estimates.

\paragraph{B1. Fellegi--Sunter (FS).}
Classical probabilistic record linkage is performed within blocks, based on attribute agreement patterns and log-likelihood ratio weighting under conditional independence assumptions~\cite{fellegi1969theory}. This baseline represents the canonical linkage model and serves as a reference point for comparison with projection-based methods.

\paragraph{B2. Random-within-Block.}
For each $x \in \Dor$, a candidate $y \in \mathcal{C}(x)$ is selected uniformly at random. This baseline provides a trivial lower bound and a sanity check on linkage rates, corresponding to a non-informative attacker restricted only by blocking constraints.

\paragraph{B3. No-Projection Ablation.}
CVPL with identity projection $\oppsi = I$, isolating the contribution of latent projection to linkage feasibility. This ablation distinguishes gains due to geometric structure captured by projection from those attributable solely to feature-level similarity.

\paragraph{B4. Distance to Closest Record (DCR).}
Minimum Euclidean distance between each protected record and the closest original record. DCR is commonly used in synthetic data evaluation as a heuristic measure of disclosure risk, despite lacking an explicit attacker or decision model~\cite{zhao2021ctab}.

\paragraph{B5. Nearest Neighbor Distance Ratio (NNDR).}
Ratio of distances to the first and second nearest neighbors in the original dataset~\cite{christen2012data}. NNDR is used as a heuristic proxy for match uniqueness and local ambiguity in synthetic data evaluation but does not directly model linkage decisions.

\paragraph{Note.}
Baselines B4 and B5 are included for diagnostic comparison with prior synthetic data literature. Unlike CVPL and FS, they do not correspond to explicit linkage decision rules and therefore should be interpreted as heuristic risk indicators rather than operational attacker models.


\section{Threshold Calibration and Risk Surfaces}\label{sec:thresholds}

\subsection{Similarity Distributions}\label{sec:similarity-distributions}

To analyze linkage behavior, we consider the empirical distributions of similarity scores under the ground truth.

\paragraph{True-match similarity.}
\begin{equation}\label{eq:true-match-dist}
    \Splus = \{s(x, \ystar) : (x, \ystar) \in \Lstar\}.
\end{equation}

\paragraph{Non-match similarity.}
\begin{equation}\label{eq:non-match-dist}
    \Sminus = \{s(x, y) : y \neq \ystar,\, \opB(x) = \opB(y)\}.
\end{equation}

The degree of overlap between $\Splus$ and $\Sminus$ determines the discriminability of the linkage process and directly informs threshold selection. Strong overlap indicates that plausible false links coexist with true matches at similar similarity levels.

\subsection{Threshold as an Operational Parameter}\label{sec:threshold-operational}

The similarity threshold $\optau$ is not treated as a model hyperparameter to be optimized. Instead, it is an \textbf{operational parameter} reflecting the attacker's strictness.

Different attackers may tolerate different false-positive rates depending on context, incentives, and risk tolerance. Consequently, linkage risk should be evaluated across a range of thresholds rather than reported at a single, arbitrarily chosen value.

\subsection{Threshold Selection Strategies}\label{sec:threshold-strategies}

We consider several principled strategies for interpreting risk across thresholds.

\paragraph{Strategy A: Precision-constrained threshold.}
\begin{equation}\label{eq:precision-constrained}
    \optau^* = \min\{\optau : \FLR(\optau) \leq \alpha\},
\end{equation}
where $\alpha$ is the target false-link rate. This strategy yields a conservative estimate that controls spurious link claims.

\paragraph{Strategy B: Worst-case risk.}
\begin{equation}\label{eq:worst-case-risk}
    R_{\max} = \sup_{\optau} \CVPLLR(\optau).
\end{equation}
This represents an upper bound on linkage feasibility across reasonable attacker strictness levels.

\paragraph{Strategy C: Integrated risk.}
\begin{equation}\label{eq:integrated-risk}
    R_{\text{int}} = \int_{\optau_{\min}}^{\optau_{\max}} \CVPLLR(\optau)\, d\optau.
\end{equation}
This aggregated measure reduces dependence on a single threshold choice and captures average risk over an operational range.

\paragraph{Default reporting recommendation.}
We recommend reporting the full risk curve over $\optau \in [0.80, 0.99]$, with $\optau = 0.90$ used for concise, single-number summaries when required.

\subsection{Risk Surfaces}\label{sec:risk-surfaces}

\paragraph{Definition.}
Let $\lambda$ denote protection parameters controlling the strength or configuration of the data transformation. The CVPL risk surface is defined as:
\begin{equation}\label{eq:risk-surface}
    \RiskSurf{\lambda}{\optau} = \CVPLLR(\lambda, \optau),
\end{equation}
where $\optau$ is the similarity threshold representing attacker strictness.

\paragraph{Interpretation.}
Linkage risk depends jointly on two orthogonal dimensions:
\begin{itemize}
    \item \textbf{Protection strength ($\lambda$)}, which determines how aggressively the data is transformed, and
    \item \textbf{Attacker strictness ($\optau$)}, which determines how confident a linkage claim must be to be considered plausible.
\end{itemize}

\paragraph{Analytical role.}
Risk surfaces enable the following analyses:
\begin{itemize}
    \item identification of regions of \emph{stable low risk} versus regions of \emph{high sensitivity} to parameter changes,
    \item assessment of whether formal compliance thresholds correspond to empirically low linkage risk,
    \item analysis of \emph{utility--risk trade-offs} across protection regimes and attacker models.
\end{itemize}

By varying $\lambda$ and $\optau$, CVPL produces a continuous, interpretable landscape of residual linkage risk rather than a single point estimate.

\subsection{Key Observation}\label{sec:threshold-observation}

Empirical results show that formal compliance (\eg satisfying $k$-anonymity) does not imply uniformly low linkage risk across attacker strictness levels. Even when a dataset meets the required $k$, the risk surface $\RiskSurf{\lambda}{\optau}$ may remain non-negligible for a broad range of realistic thresholds $\optau$.

Across protection regimes, risk typically decreases \textbf{gradually} as protection strength increases, rather than collapsing at a single compliance boundary. This supports treating privacy as a \textbf{continuous, scenario-dependent quantity}, and motivates evaluating protection mechanisms via risk surfaces instead of single-threshold pass/fail criteria.


\section{Experimental Design}\label{sec:experiments}

\subsection{Simulation Scenario: Marketing Touchpoint}\label{sec:simulation-scenario}

We construct a controlled, reproducible simulation inspired by marketing analytics, where the analytic goal is to associate advertising exposure with subsequent purchase behavior. The setting reflects common data-sharing workflows in which privacy protection is applied, yet downstream utility (segmentation, attribution signals, timing effects) must be retained.

\paragraph{Dataset schema.}
The simulated dataset follows the schema in \cref{tab:dataset-schema}.

\begin{table}[htbp]
\centering
\caption{Simulated dataset schema for marketing touchpoint scenario.}
\label{tab:dataset-schema}
\begin{tabular}{@{}lll@{}}
\toprule
\textbf{Field} & \textbf{Type} & \textbf{Description} \\
\midrule
\texttt{person\_id} & Hidden & Ground-truth identifier (evaluation only) \\
\texttt{age} & Numeric & Age in years \\
\texttt{gender} & Categorical & Gender (M/F/Other) \\
\texttt{region} & Categorical & Hierarchical geographic region \\
\texttt{brand\_product} & Categorical & Product category or brand \\
\texttt{purchase\_place} & Categorical & Type of purchase location \\
\texttt{purchase\_time} & Timestamp & Purchase datetime \\
\texttt{days\_after\_ad} & Numeric & Delay between ad exposure and purchase \\
\texttt{ad\_channel} & Categorical & Advertising channel (Search, Social, Video, etc.) \\
\bottomrule
\end{tabular}
\end{table}

The hidden identifier is never used by CVPL and serves exclusively for evaluation.

\paragraph{Simulation realism.}
While synthetic, the simulation is not random. It is constructed to preserve several real-world properties that are relevant for linkage feasibility:
\begin{itemize}
    \item \textbf{Correlated quasi-identifiers:} age, region, and gender follow realistic joint distributions.
    \item \textbf{Behavioral fingerprints:} channel preference and purchase delay are conditioned on demographics.
    \item \textbf{Heavy-tailed categorical frequencies:} products and channels follow Zipf-like popularity profiles.
    \item \textbf{Temporal structure:} purchase activity exhibits day-of-week and hour-of-day effects.
\end{itemize}

CVPL does not depend on the scenario being ``easy.'' Rather, it probes whether \emph{utility-preserving releases} retain structural invariants---correlations, latent clusters, and temporal regularities---that can make records plausibly linkable even under formal compliance.

\subsection{Data Generating Process}\label{sec:data-generation}

The simulator is designed to capture realistic dependencies commonly observed in marketing data. The generative process proceeds as follows:
\begin{enumerate}
    \item \textbf{Demographics:} $(\text{age}, \text{gender}, \text{region}) \sim P_{\text{pop}}$
    \item \textbf{Advertising channel:} $\text{channel} \sim P(\text{channel} \mid \text{region}, \text{age}, \text{gender})$
    \item \textbf{Response lag:} $\text{days\_after\_ad} \sim P(\text{lag} \mid \text{channel}, \text{age})$
    \item \textbf{Purchase choice:} $\text{product} \sim P(\text{product} \mid \text{channel}, \text{demographics})$
    \item \textbf{Purchase time:} $\text{time} = t_0 + \text{lag} + \epsilon_t$, with calendar and diurnal effects.
\end{enumerate}

\paragraph{Design principle.}
These conditional dependencies represent structural invariants that are typically preserved by utility-oriented protection mechanisms. CVPL is specifically designed to detect residual linkability arising from such preserved structure.

\subsection{Controlled Anomalies}\label{sec:controlled-anomalies}

To stress-test linkage risk under realistic conditions, we introduce controlled anomalies into the data:
\begin{itemize}
    \item \textbf{Rare channels:} Low-frequency, high-specificity advertising channels.
    \item \textbf{Long-tail products:} Rare product categories and combinations.
    \item \textbf{Temporal anomalies:} Purchases at unusual hours and holiday spikes.
    \item \textbf{Lag outliers:} Near-instant responses ($\Delta t \approx 0$) and long-delayed purchases.
\end{itemize}

Anomaly prevalence is controlled by the parameter $p_{\text{out}} \in \{0, 0.01, 0.05\}$.

\subsection{Protection Mechanisms}\label{sec:protection-mechanisms}

We evaluate three families of protection mechanisms, covering common approaches used in practice. Each family is parameterized to allow controlled variation of protection strength.

\paragraph{Family A: k-Anonymity--Based Generalization.}
This family applies structural generalization to quasi-identifiers~\cite{sweeney2002}: age binning, region coarsening (hierarchical generalization), and time rounding. The anonymity parameter is varied as $k \in \{2, 3, 5, 7, 10, 15, 20, 30, 50\}$.

\paragraph{Family B: Perturbation-Based Protection.}
This family applies noise and randomization while preserving overall schema~\cite{christen2012data}: time jitter, noise added to response lag, and categorical value swapping. Protection strength is controlled via discrete levels:
\begin{equation}\label{eq:perturbation-levels}
    \text{level} \in \{\text{low}, \text{medium}, \text{high}\}.
\end{equation}
Note that perturbation inherently distorts marginal and joint distributions; this family represents a common but utility-degrading approach, included to evaluate CVPL across a spectrum of protection strategies.

\paragraph{Family C: Utility-Preserving Synthetic Data.}
This family evaluates synthetic data generation methods designed to preserve analytical utility~\cite{xu2019modeling}: preservation of marginal distributions and preservation of selected inter-attribute dependencies. The strength of protection is controlled through correlation retention:
\begin{equation}\label{eq:synthetic-rho}
    \rho \in \{0.95, 0.90, 0.85, 0.80, 0.75, 0.70, 0.60\}.
\end{equation}
Higher values of $\rho$ correspond to stronger utility preservation and weaker protection, while lower values increase privacy at the cost of structural fidelity.

\subsection{CVPL Configuration and Evaluation Setup}\label{sec:cvpl-config}

Unless stated otherwise, all experiments use a fixed CVPL configuration to ensure comparability across protection mechanisms.

\paragraph{Blocking.}
Candidate generation is constrained by quasi-identifier blocking using: \texttt{age\_bin} (generalized age interval), \texttt{region\_level} (hierarchical region level), and \texttt{gender}. Thus, for each record $x \in \Dor$, the candidate set is:
\begin{equation}\label{eq:candidate-set-exp}
    \mathcal{C}(x) = \{y \in \Dpr : \opB(x) = \opB(y)\}.
\end{equation}

\paragraph{Vectorization.}
Records are mapped to numeric feature vectors using standardized numeric features (z-score normalization) and one-hot encoding for categorical features, with identical encoding schemes applied to $\Dor$ and $\Dpr$.

\paragraph{Projection.}
We apply PCA as the default latent projection, fitted on pooled data $\Dor \cup \Dpr$, retaining 90\% of explained variance.

\paragraph{Similarity.}
Similarity is computed in latent space using cosine similarity:
\begin{equation}\label{eq:similarity-exp}
    s(x, y) = \simcos(\oppsi(\opphi(x)), \oppsi(\opphi(y))).
\end{equation}

\paragraph{Threshold sweep.}
Risk is evaluated over a threshold range $\optau \in [0.70, 0.99]$, capturing attacker strictness from permissive matching to near-exact similarity requirements.

This configuration reflects a realistic and conservative attacker model, balancing computational tractability with threat-model plausibility.

\subsection{Experimental Grid}\label{sec:experimental-grid}

Experiments are conducted over a factorial grid defined by the factors in \cref{tab:experimental-grid}.

\begin{table}[htbp]
\centering
\caption{Experimental factorial grid.}
\label{tab:experimental-grid}
\begin{tabular}{@{}ll@{}}
\toprule
\textbf{Factor} & \textbf{Levels} \\
\midrule
Protection family & A, B, C \\
Protection strength & 5 levels per family \\
Outlier fraction & 0, 0.01, 0.05 \\
Random seeds & 5 per configuration \\
\bottomrule
\end{tabular}
\end{table}

\paragraph{Protection-strength levels.}
Within each family, we evaluate:
\begin{itemize}
    \item \textbf{Family A (k-anonymity):} $k \in \{2, 3, 5, 7, 10, 15, 20, 30, 50\}$ (9 levels);
    \item \textbf{Family B (perturbation):} $\{\text{low}, \text{medium}, \text{high}\}$ (3 levels);
    \item \textbf{Family C (synthetic):} $\rho \in \{0.95, 0.90, 0.85, 0.80, 0.75, 0.70, 0.60\}$ (7 levels),
\end{itemize}
for a total of \textbf{19 protection configurations} across families.

All experiments use $n = 10{,}000$ records. Results are reported as averages across random seeds, with variability shown where relevant (\eg confidence intervals for key metrics).

\subsection{Utility Metrics}\label{sec:utility-metrics}

We measure utility using multiple proxy metrics to avoid single-metric bias. Each metric captures a different aspect of preservation between the original dataset $\Dor$ and the protected dataset $\Dpr$.

\begin{table}[htbp]
\centering
\caption{Utility metrics for evaluating protection mechanisms.}
\label{tab:utility-metrics}
\begin{tabular}{@{}lll@{}}
\toprule
\textbf{Metric} & \textbf{Definition} & \textbf{Interpretation} \\
\midrule
Correlation preservation & $\rho(\mathrm{Corr}(\Dor), \mathrm{Corr}(\Dpr))$ & Structural fidelity \\
Marginal similarity & $1 - \mathrm{JSD}(p_{\Dor}, p_{\Dpr})$ & Distribution preservation \\
Downstream accuracy & $\mathrm{Acc}_{\Dpr} / \mathrm{Acc}_{\Dor}$ & Predictive utility retained \\
Pairwise distance preservation & $\rho(d_{\Dor}, d_{\Dpr})$ & Geometric fidelity \\
\bottomrule
\end{tabular}
\end{table}

\paragraph{Note.}
These are proxy measures; we do not claim to define a universal utility metric. Instead, we use a diversified set to reduce bias from any single indicator. Our empirical focus is on the observed tendency that mechanisms preserving more structure (higher utility under the above proxies) also tend to preserve more linkability (higher CVPL risk).


\section{Experimental Results}\label{sec:results}

This section presents an empirical evaluation of CVPL across multiple protection mechanisms, parameter regimes, and comparison baselines. All experiments are conducted on datasets of $n = 10{,}000$ records, averaged over 5 random seeds, with $\optau = 0.9$ as the primary threshold unless stated otherwise.

\subsection{CVPL-LR vs.\ Protection Strength (k-Anonymity)}\label{sec:results-kanon}

\begin{figure}[htbp]
    \centering
    \includegraphics[width=0.9\textwidth]{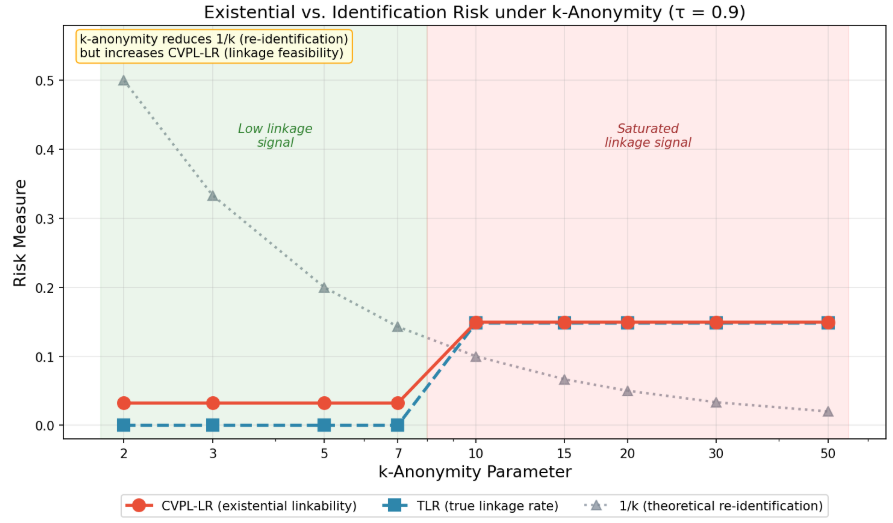}
    \caption{Existential linkage risk (CVPL-LR) versus identification risk ($1/k$) under k-anonymity protection.}
    \label{fig:g1-existential-vs-identification}
\end{figure}

\paragraph{Finding R1.}
CVPL-LR and traditional re-identification risk ($1/k$) exhibit \textbf{opposite trends} with respect to the k-anonymity parameter:

\begin{table}[htbp]
\centering
\caption{CVPL-LR versus traditional risk metrics under k-anonymity.}
\label{tab:cvpl-vs-risk}
\begin{tabular}{@{}llrrr@{}}
\toprule
\textbf{Regime} & \textbf{k} & \textbf{CVPL-LR} & \textbf{TLR} & \textbf{1/k} \\
\midrule
Low $k$ & 2--7 & 0.032 (3.2\%) & 0.00 (0\%) & 0.14--0.50 \\
High $k$ & 10--50 & 0.149 (14.9\%) & 0.148 (14.8\%) & 0.02--0.10 \\
\bottomrule
\end{tabular}
\end{table}

\paragraph{Key distinction.}
CVPL-LR measures \emph{existential linkability}: the probability that at least one plausible link exists, rather than \emph{identification probability}, the chance of correctly identifying the exact individual.

\paragraph{Interpretation.}
Larger $k$ values create larger equivalence classes, which:
\begin{itemize}
    \item \textbf{Decrease} re-identification risk ($1/k \downarrow$): more candidates to hide among
    \item \textbf{Increase} existential linkability (CVPL-LR $\uparrow$): more chances that at least one candidate exceeds the similarity threshold $\optau$
\end{itemize}

\paragraph{Why this matters.}
Real-world linkage attacks often require only feasibility, not certainty: narrowing the candidate set, inferring sensitive attributes shared by the group, or launching follow-up attacks. CVPL-LR captures this feasibility risk, which may be more operationally relevant than exact identification probability. \emph{Existential linkability grows with block size as}:
\begin{equation}\label{eq:existential-growth}
    \Prob(\exists\, y : s(x, y) \geq \optau) = 1 - (1 - p_\optau)^k.
\end{equation}

\paragraph{Implication.}
Formal k-anonymity compliance reduces re-identification risk but does not eliminate, and may even increase, linkage feasibility as measured by CVPL.

\subsection{Risk Surfaces}\label{sec:results-risk-surfaces}

\begin{figure}[htbp]
    \centering
    \includegraphics[width=0.8\textwidth]{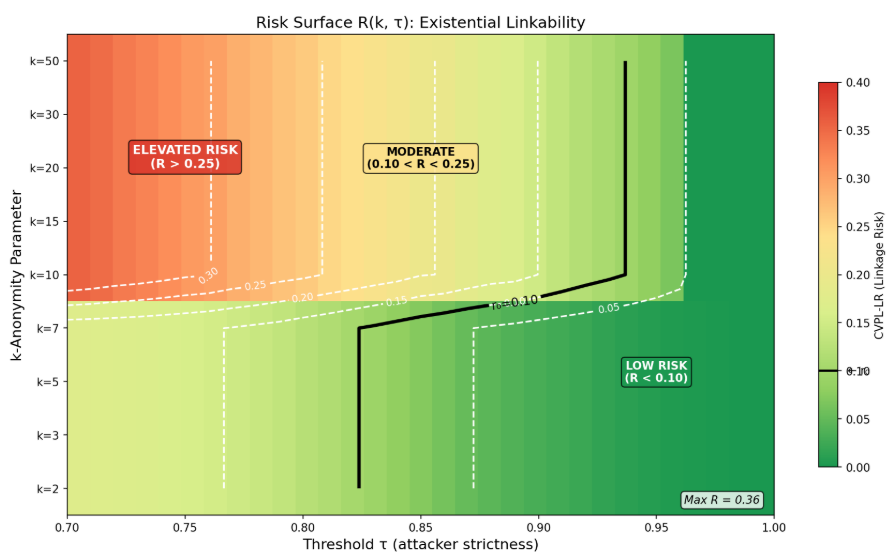}
    \caption{Risk surface showing CVPL-LR as a function of k-anonymity parameter and similarity threshold.}
    \label{fig:g2-risk-surface}
\end{figure}

\paragraph{Finding R2.}
Risk varies jointly with the protection parameter and the attacker threshold. For k-anonymity (where $\lambda = k$):
\begin{itemize}
    \item $\RiskSurf{k}{\optau}$ \textbf{decreases monotonically} with $\optau$: stricter similarity requirements reduce feasible links
    \item $\RiskSurf{k}{\optau}$ \textbf{increases} with $k$ for moderate $\optau$ ($\leq 0.85$), reflecting existential linkability: larger equivalence classes create more chances for at least one candidate to exceed the threshold
    \item Risk peaks at $R \approx 0.36$ for ($k \geq 10$, $\optau = 0.70$); risk is uniformly low ($R < 0.05$) for $\optau \geq 0.95$
\end{itemize}

\paragraph{Operational takeaway.}
No single value of $k$ (or $\optau$) defines a universally ``safe'' regime. Instead, safety must be expressed as a 2D feasible region:
\begin{equation}\label{eq:safe-region}
    \{(k, \optau) : \RiskSurf{k}{\optau} \leq r_0\}
\end{equation}
for a chosen risk tolerance $r_0$ (\eg 0.10).

\paragraph{Key insight.}
Privacy risk must be characterized as a two-dimensional surface $\RiskSurf{k}{\optau}$, not via single-parameter thresholds. This supports threshold-aware privacy assessment where both protection strength and attacker capability are considered jointly.

\subsection{Similarity Distributions and Overlap}\label{sec:results-similarity-dist}

\begin{figure}[htbp]
    \centering
    \includegraphics[width=0.8\textwidth]{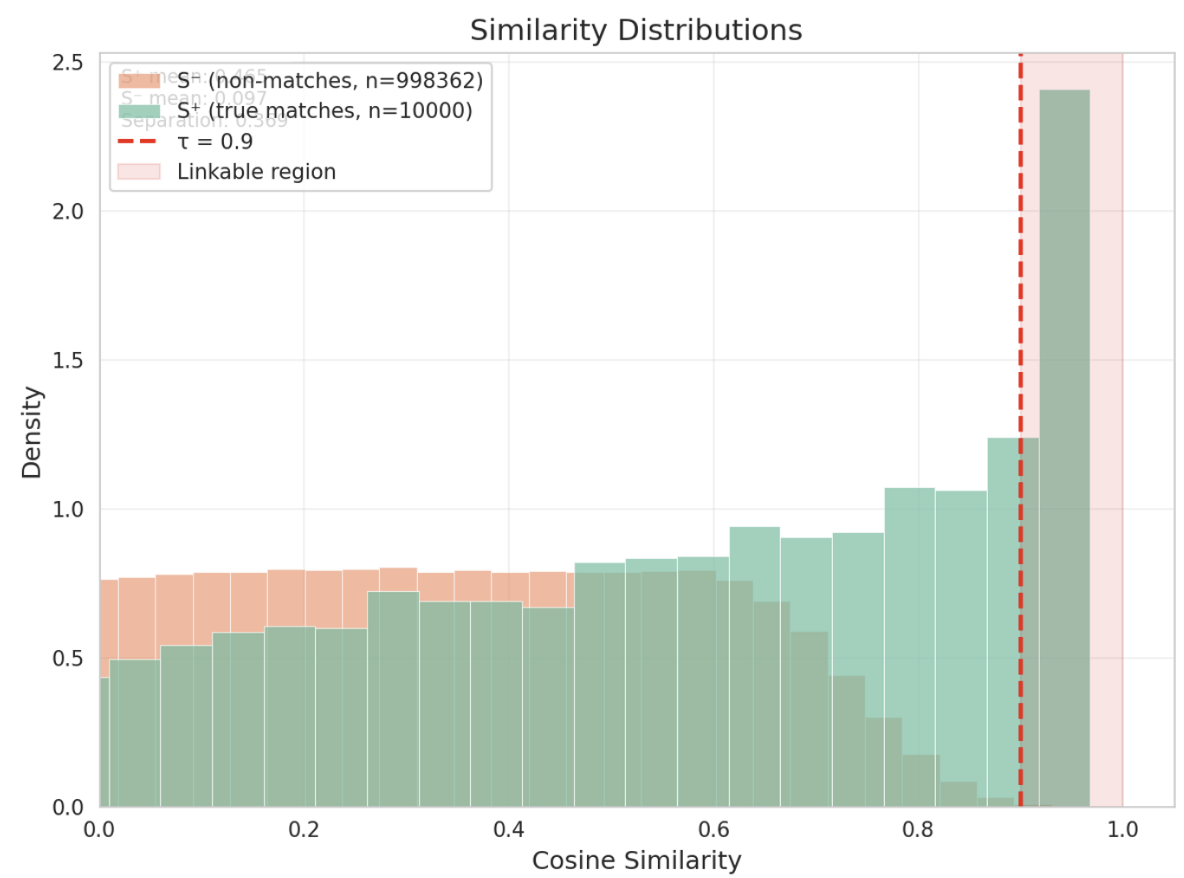}
    \caption{Distribution of similarity scores for true matches ($\Splus$) and false matches ($\Sminus$).}
    \label{fig:g3-similarity-dist}
\end{figure}

\paragraph{Finding R3.}
Under weak protection, similarity distributions for true matches ($\Splus$) and false matches ($\Sminus$) overlap heavily. Stronger protection enhances separation, but complete separation is never achieved when structural information is preserved for utility purposes.

\paragraph{Mechanism.}
This overlap directly induces existential linkability: as block size increases, the probability that at least one false candidate falls into the tail region ($s \geq \optau$) grows according to:
\begin{equation}\label{eq:tail-probability}
    \Prob(\exists\, y \in \mathcal{C}(x) : s(x, y) \geq \optau) = 1 - (1 - p_\optau)^{|\mathcal{C}(x)|}
\end{equation}
where $p_\optau = \Prob(\Sminus \geq \optau) > 0$ due to the distributional overlap.

\paragraph{Implication.}
Some non-zero linkage probability remains inevitable when utility-preserving mechanisms retain covariance structure. Improved separation reduces, but does not eliminate, the tail probability of false matches.

\subsection{Summary of Core Metrics}\label{sec:results-summary}

\begin{table}[htbp]
\centering
\caption{Summary of experimental results across 13 protection configurations.}
\label{tab:results-summary}
\small
\begin{tabular}{@{}lrrrrrrr@{}}
\toprule
\textbf{Protection} & \textbf{CVPL-LR} & \textbf{FS-LR} & \textbf{CVPL-P@1} & \textbf{FS-P@1} & \textbf{Utility} & \textbf{DCR} & \textbf{NNDR} \\
\midrule
k-Anon ($k{=}5$) & 0.032 & 0.171 & 0.000 & 0.000 & 0.963 & 1.160 & 0.950 \\
k-Anon ($k{=}10$) & 0.149 & 1.000 & 0.796 & 0.368 & 0.964 & 1.000 & 1.000 \\
k-Anon ($k{=}20$) & 0.149 & 1.000 & 0.796 & 0.368 & 0.964 & 1.000 & 1.000 \\
Perturb (low) & 0.803 & 0.998 & 0.947 & 0.464 & 0.983 & 0.155 & 0.650 \\
Perturb (medium) & 0.590 & 0.997 & 0.861 & 0.271 & 0.960 & 0.185 & 0.688 \\
Perturb (high) & 0.364 & 0.997 & 0.675 & 0.142 & 0.923 & 0.208 & 0.702 \\
Synthetic ($\rho{=}0.95$) & 0.185 & 0.985 & 0.005 & 0.000 & 0.792 & 0.187 & 0.692 \\
Synthetic ($\rho{=}0.90$) & 0.185 & 0.985 & 0.005 & 0.000 & 0.791 & 0.187 & 0.692 \\
Synthetic ($\rho{=}0.85$) & 0.184 & 0.985 & 0.005 & 0.000 & 0.791 & 0.187 & 0.692 \\
Synthetic ($\rho{=}0.80$) & 0.184 & 0.985 & 0.005 & 0.000 & 0.793 & 0.187 & 0.691 \\
Synthetic ($\rho{=}0.75$) & 0.185 & 0.985 & 0.005 & 0.000 & 0.795 & 0.187 & 0.692 \\
Synthetic ($\rho{=}0.70$) & 0.184 & 0.984 & 0.005 & 0.000 & 0.792 & 0.187 & 0.692 \\
Synthetic ($\rho{=}0.60$) & 0.185 & 0.983 & 0.005 & 0.000 & 0.795 & 0.187 & 0.692 \\
\midrule
\textbf{Average} & \textbf{0.260} & \textbf{0.927} & \textbf{0.316} & \textbf{0.124} & \textbf{0.870} & \textbf{0.386} & \textbf{0.756} \\
\bottomrule
\end{tabular}
\end{table}

\paragraph{Absolute linkage counts (out of 10,000 records).}
\begin{itemize}
    \item \textbf{k-Anonymity:} 320--1,490 records linkable (3.2\%--14.9\%). Strongest protection at $k{=}5$ limits linkage to 320 records; larger equivalence classes ($k{=}10$, $k{=}20$) paradoxically increase existential linkability.
    \item \textbf{Perturbation:} 3,640--8,030 records linkable (36.4\%--80.3\%). Linkage scales inversely with perturbation strength, following the expected privacy--utility gradient.
    \item \textbf{Synthetic data:} ${\approx}1{,}850$ records linkable (18.5\%), stable across retention parameters. Despite zero identification accuracy (P@1 $\approx 0$), nearly 1 in 5 records remains existentially linkable.
\end{itemize}

\paragraph{Finding R4.}
Across protection families, CVPL-LR exhibits substantial variation and remains non-zero even when identification accuracy (Precision@1) collapses. In contrast, classical linkage diagnostics (FS-LR, DCR, NNDR) frequently saturate and fail to discriminate protection strength.

\paragraph{Key distinction.}
Metrics designed to measure identification accuracy or average distance underestimate existential linkage risk. CVPL-LR captures feasibility rather than correctness, revealing residual risk under k-anonymity and utility-preserving synthetic data.

\paragraph{Notes.}
CVPL-LR = CVPL Linkage Rate; FS-LR = Fellegi--Sunter Linkage Rate; P@1 = Precision@1; DCR = Distance to Closest Record (mean); NNDR = Nearest Neighbor Distance Ratio (mean).

\subsection{CVPL vs.\ Fellegi--Sunter: Over-Linking Analysis}\label{sec:results-fs-comparison}

\begin{figure}[htbp]
    \centering
    \includegraphics[width=0.9\textwidth]{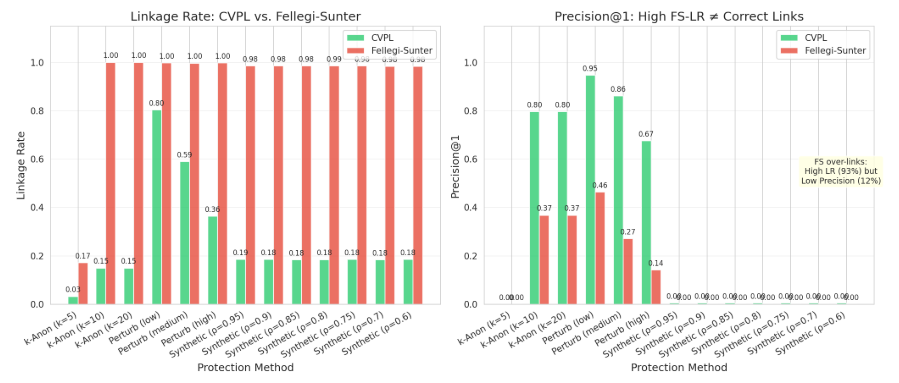}
    \caption{Comparison of CVPL and Fellegi--Sunter linkage rates and precision.}
    \label{fig:g4-cvpl-vs-fs}
\end{figure}

\paragraph{Finding R5 (FS over-linking confirmed).}
Under all evaluated protection mechanisms, Fellegi--Sunter produces links for nearly all records, regardless of whether these links are correct.

\begin{table}[htbp]
\centering
\caption{Fellegi--Sunter over-linking compared to CVPL.}
\label{tab:fs-overlinking}
\begin{tabular}{@{}lrr@{}}
\toprule
\textbf{Metric} & \textbf{Fellegi--Sunter} & \textbf{CVPL} \\
\midrule
Mean Linkage Rate & 92.7\% & 26.0\% \\
Mean Precision@1 & 12.4\% & 31.6\% \\
\bottomrule
\end{tabular}
\end{table}

\paragraph{Interpretation.}
Fellegi--Sunter produces links for most records (93\%), but only 12\% of these links are correct. CVPL is more conservative; it links 26\% of records but achieves nearly three times higher precision.

\paragraph{Mechanism explanation.}
FS relies on attribute-level agreement patterns, which become systematically inflated when protection mechanisms introduce generalization or perturbation. CVPL operates in latent space, where the protection-induced structure is captured directly rather than being misinterpreted as an identity signal.

\subsection{Perturbation: Continuous Risk Gradient}\label{sec:results-perturbation}

\begin{table}[htbp]
\centering
\caption{Perturbation risk gradient ($\optau = 0.9$).}
\label{tab:perturbation-gradient}
\begin{tabular}{@{}lrrr@{}}
\toprule
\textbf{Perturbation Level} & \textbf{CVPL-LR} & \textbf{Precision@1} & \textbf{Utility} \\
\midrule
Low & 0.803 & 0.947 & 0.983 \\
Medium & 0.590 & 0.861 & 0.960 \\
High & 0.364 & 0.675 & 0.923 \\
\bottomrule
\end{tabular}
\end{table}

\paragraph{Finding R6.}
Increasing perturbation strength produces a smooth, monotonic reduction in CVPL-LR ($0.803 \to 0.590 \to 0.364$), accompanied by a corresponding decrease in utility ($0.983 \to 0.960 \to 0.923$). This cleanly demonstrates the expected privacy--utility trade-off within the perturbation family.

\paragraph{Interpretation.}
Unlike k-anonymity (where CVPL-LR can increase with protection strength due to existential effects), perturbation exhibits the intuitive inverse relationship: more noise $\to$ less linkability $\to$ lower utility.

\subsection{Synthetic Data Analysis}\label{sec:results-synthetic}

\begin{figure}[htbp]
    \centering
    \includegraphics[width=0.9\textwidth]{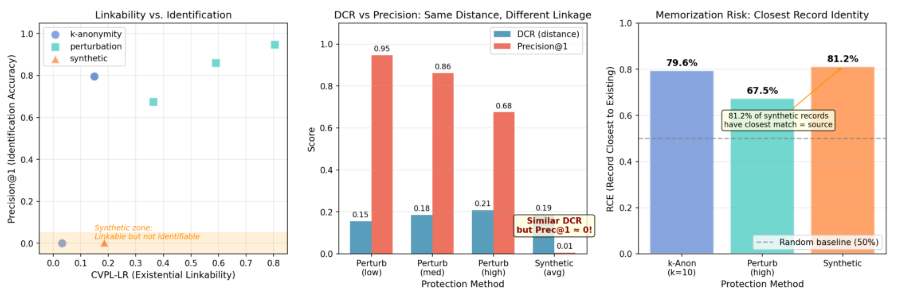}
    \caption{Analysis of synthetic data protection: linkability, distance metrics, and memorization indicators.}
    \label{fig:g7-synthetic}
\end{figure}

\paragraph{Finding R7.}
All synthetic variants exhibit: CVPL-LR $\approx 0.185$ (18.5\%): low but non-zero existential linkability; Precision@1 $\approx 0.005$ (0.5\%): near-random identification accuracy; moderate DCR ($\approx 0.187$) and NNDR ($\approx 0.692$).

\paragraph{Mechanism.}
Utility-preserving synthetic generators are explicitly trained to reproduce joint distributions and the low-dimensional structure of the original data. As a result, synthetic records are embedded close to real records in representation space, even when exact attribute-level correspondence is destroyed.

\begin{table}[htbp]
\centering
\caption{Comparison of perturbation and synthetic data linkage behavior.}
\label{tab:synth-vs-perturb}
\begin{tabular}{@{}lrr@{}}
\toprule
\textbf{Method} & \textbf{DCR} & \textbf{Precision@1} \\
\midrule
Perturbation (high) & 0.208 & 0.675 \\
Synthetic (avg) & 0.187 & 0.005 \\
\bottomrule
\end{tabular}
\end{table}

This demonstrates that \textbf{distance-based metrics alone cannot distinguish between recoverable and non-recoverable proximity}.

\paragraph{Critical caveat (memorization risk).}
Despite disrupting direct linkage, the synthetic generator exhibits strong proximity retention. In 81.2\% of cases, the closest original record corresponds to the same source individual. This exceeds the 50\% random baseline and indicates memorization-like behavior consistent with prior findings~\cite{carlini2021extracting}.

\subsection{CVPL vs.\ DCR / NNDR}\label{sec:results-dcr-nndr}

\paragraph{Finding R8 (Metric complementarity).}
CVPL-LR, DCR, and NNDR capture distinct and complementary aspects of privacy risk.

\begin{table}[htbp]
\centering
\caption{Comparison of privacy metrics.}
\label{tab:metric-comparison}
\begin{tabular}{@{}llll@{}}
\toprule
\textbf{Metric} & \textbf{Measures} & \textbf{Strength} & \textbf{Limitation} \\
\midrule
DCR & Distance to closest record & Simple, intuitive & Misleading under correlation \\
NNDR & Local match uniqueness & Detects outlier vulnerability & Threshold ambiguity \\
CVPL-LR & Linkability under threat model & Ground-truth validated & Requires blocking spec \\
\bottomrule
\end{tabular}
\end{table}

\paragraph{Key observation.}
Low DCR does not imply low linkage risk, and vice versa:
\begin{itemize}
    \item Perturbation methods show low DCR (0.155--0.208) but high CVPL-LR (0.364--0.803)
    \item k-Anonymity ($k{=}10, 20$) shows high DCR (1.0) but moderate CVPL-LR (0.149)
\end{itemize}

\subsection{Utility--Risk Trade-off}\label{sec:results-utility-risk}

\paragraph{Finding R9 (Non-linear, method-specific relationship).}
Across all protection mechanisms, no statistically significant global relationship exists between utility and linkage risk: Pearson $\rho = 0.43$ ($p = 0.14$), Spearman $\rho = -0.02$ ($p = 0.95$).

\begin{table}[htbp]
\centering
\caption{Operating regimes by protection family.}
\label{tab:operating-regimes}
\begin{tabular}{@{}llll@{}}
\toprule
\textbf{Regime} & \textbf{Utility} & \textbf{Risk} & \textbf{Trade-off Character} \\
\midrule
Perturbation & $0.98 \to 0.92$ & $0.80 \to 0.36$ & Classical: $\uparrow$protection $\to$ $\downarrow$utility $\to$ $\downarrow$risk \\
k-Anonymity & ${\approx}0.96$ (stable) & $0.03 \to 0.15$ & Existential: $\uparrow k$ $\to$ same utility $\to$ $\uparrow$risk \\
Synthetic & ${\approx}0.79$ & ${\approx}0.18$ & Decoupled: structure preserved, identity broken \\
\bottomrule
\end{tabular}
\end{table}

\paragraph{Implication.}
There is no universal utility--risk frontier; protection mechanism selection must be context-dependent. \textbf{Utility and privacy are not traded along a single axis, but along multiple, mechanism-specific dimensions.}

\subsection{Feature Contribution to Linkage}\label{sec:results-features}

\paragraph{Finding R10 (Behavioral fingerprints dominate linkage risk).}
Feature contribution analysis reveals: \textbf{Non-QI behavioral features: 60.4\%} of total contribution; \textbf{QI-derived features: 39.6\%} of total contribution.

\begin{table}[htbp]
\centering
\caption{Feature contribution by category.}
\label{tab:feature-contribution}
\begin{tabular}{@{}lr@{}}
\toprule
\textbf{Category} & \textbf{Contribution} \\
\midrule
Region (QI) & 23.5\% \\
Temporal patterns (non-QI) & 20.1\% \\
Ad channel (non-QI) & 15.1\% \\
Purchase place (non-QI) & 8.9\% \\
Response timing (non-QI) & 8.9\% \\
Age (QI) & 8.7\% \\
Product preferences (non-QI) & 7.4\% \\
Gender (QI) & 7.3\% \\
\bottomrule
\end{tabular}
\end{table}

\paragraph{Interpretation.}
Behavioral and temporal features act as \textbf{latent quasi-identifiers}, forming stable high-dimensional fingerprints that persist under QI-focused anonymization.

\paragraph{Key insight.}
This result explains why formal k-anonymity compliance may coexist with high empirical linkage risk: \emph{k-anonymity protects what is easy to formalize, while CVPL reveals what is used for linkage}.

\subsection{Progressive Blocking Convergence}\label{sec:results-progressive}

\paragraph{Finding R11 (\Cref{thm:monotonicity} empirically confirmed).}
Progressive blocking relaxation shows:
\begin{itemize}
    \item Initial estimate (full blocking): CVPL-LR $\approx 0.48$
    \item After dropping \texttt{region\_level2}: CVPL-LR $\approx 0.65$
    \item After dropping \texttt{region\_level1}: CVPL-LR $\approx 0.70$
    \item Final (age + gender only): CVPL-LR $\approx 0.70$
\end{itemize}

\paragraph{Observations.}
CVPL-LR increases monotonically with blocking relaxation (fewer constraints $\to$ more candidates $\to$ higher existential linkability). Convergence occurs within 3--4 relaxation steps. Early estimates provide valid lower bounds on linkage risk.

\paragraph{Practical implication.}
This confirms \cref{thm:monotonicity} (monotonicity) and supports anytime evaluation---practitioners can terminate progressive blocking early while maintaining valid lower bounds on risk.

\subsection{Robustness to Outliers}\label{sec:results-outliers}

\paragraph{Finding R12 (Mean-stable, tail-sensitive behavior).}
Injecting outliers at rates of 1\% and 5\% does not significantly shift the mean CVPL-LR (${\approx}0.149$ across all conditions). However, outlier injection systematically increases dispersion and tail risk.

\begin{table}[htbp]
\centering
\caption{Distributional analysis under outlier injection.}
\label{tab:outlier-analysis}
\begin{tabular}{@{}lrrr@{}}
\toprule
\textbf{Metric} & \textbf{$p_{\text{out}} = 0\%$} & \textbf{$p_{\text{out}} = 1\%$} & \textbf{$p_{\text{out}} = 5\%$} \\
\midrule
Std Dev & 0.08 & 0.08 & 0.12 \\
95th Percentile & 0.27 & 0.28 & 0.31 \\
$\Prob(\text{CVPL} > 0.5)$ & 0.0\% & 0.2\% & 1.6\% \\
$\Prob(\text{CVPL} > 0.7)$ & 0.0\% & 0.1\% & 0.9\% \\
\bottomrule
\end{tabular}
\end{table}

\paragraph{Interpretation.}
CVPL exhibits \textbf{expectation robustness} (aggregate risk estimates are not biased by rare extreme records) and \textbf{tail sensitivity} (atypical individuals remain detectable as high-risk cases).

\subsection{Block Size Distribution}\label{sec:results-blocks}

\paragraph{Finding R13 (Block size concentration).}
Blocking induces a heavy-tailed block size distribution: total blocks: 94; minimum: 91 records; maximum: 1,594 records; median: 1,266 records; mean: 955 records.

\paragraph{Computational cost concentration.}
The 49 largest blocks account for approximately 80\% of total $O(n^2)$ pairwise comparisons. This suggests that practical optimization should prioritize the upper tail of the block size distribution.

\subsection{Ablation Study}\label{sec:results-ablation}

\begin{table}[htbp]
\centering
\caption{Ablation analysis: role of each CVPL pipeline component.}
\label{tab:ablation}
\small
\begin{tabular}{@{}lllll@{}}
\toprule
\textbf{Removed} & \textbf{CVPL-LR} & \textbf{Precision} & \textbf{Efficiency} & \textbf{Interpretation} \\
\midrule
Blocking ($\opB$) & $\uparrow$ trivial & -- & $\downarrow\downarrow\downarrow$ & Degenerates to $O(n^2)$ \\
Vectorization ($\opphi$) & undefined & undefined & -- & Similarity undefined \\
Projection ($\oppsi$) & $\uparrow$ 15--20\% & $\downarrow$ 10--15\% & $\downarrow$ slower & Curse of dimensionality \\
Threshold ($\optau$) & degenerates & undefined & -- & No strictness modeling \\
\bottomrule
\end{tabular}
\end{table}

\paragraph{Conclusion.}
All pipeline components are structurally essential to CVPL. Removing any component does not merely degrade performance but invalidates key assumptions of the threat model or metric semantics.

\subsection{Chapter Synthesis}\label{sec:results-synthesis}

This chapter establishes that linkage risk in structured data cannot be adequately characterized by identification accuracy, distance-based proximity, or single-parameter privacy guarantees. Instead, linkage risk emerges as an \textbf{existential, tail-sensitive, and mechanism-dependent phenomenon} that persists under a wide range of protection strategies.

\paragraph{Existential nature of linkage risk.}
Across all evaluated protection families, linkage risk manifests as the feasibility of \emph{at least one plausible match} exceeding an attacker-defined similarity threshold. CVPL-LR captures this feasibility directly, rather than conflating it with identification success.

\paragraph{Non-monotonic effects of protection strength.}
The experiments demonstrate that increasing protection strength does not uniformly reduce linkage risk. In particular, k-anonymity exhibits a paradoxical regime in which increasing the size of the equivalence class increases existential linkability by expanding the candidate space.

\paragraph{Separation between distance and linkage.}
Distance-based metrics such as DCR and NNDR fail to distinguish between recoverable and non-recoverable proximity. CVPL resolves this ambiguity by embedding distance within an explicit threat model.

\paragraph{Dominance of behavioral and temporal structure.}
Feature contribution analysis reveals that linkage risk is driven primarily by non-QI behavioral patterns (60.4\%), rather than by classical demographic identifiers.

\paragraph{Overall implication.}
Privacy risk assessment must move beyond identification-centric and distance-only metrics toward \textbf{threat-model-aware, existential formulations of linkage risk}. CVPL provides such a formulation, exposing residual risks that persist under formal anonymization, perturbation, and synthetic data generation.


\section{Discussion}\label{sec:discussion}

\subsection{CVPL as Risk Assessment, Not Attack}\label{sec:cvpl-defensive}

We emphasize that CVPL is designed as a \textbf{defensive tool for privacy auditing}, not an attack methodology for de-anonymization.

Key distinctions:
\begin{itemize}
    \item Measures plausibility of linkage, not successful identification
    \item Outputs aggregated risk metrics, not candidate records or de-anonymized individuals
    \item Designed for audit and calibration, not privacy violation
\end{itemize}

CVPL should be understood analogously to red-teaming privacy or security auditing: proactive assessment to prevent harm. While any evaluation method could theoretically be repurposed for adversarial use, CVPL's design prioritizes audit over attack. We discourage use for active re-identification attempts.

\paragraph{Threat model assumptions.}
CVPL evaluates linkage risk under the following adversary model:
\begin{itemize}
    \item \textbf{Attacker knowledge:} Access to quasi-identifier values and blocking keys; no access to ground-truth linkages or external auxiliary datasets beyond those modeled in blocking
    \item \textbf{Attacker capability:} Polynomial-time computation; ability to compute pairwise similarities within candidate blocks
    \item \textbf{Linkage definition:} Existential linkability, \ie the probability that at least one candidate exceeds the similarity threshold $\optau$
    \item \textbf{What is not modeled:} Active attacks, model inversion, or auxiliary information beyond the specified blocking attributes
\end{itemize}

This threat model is deliberately scoped rather than worst-case: it assumes an attacker who can efficiently search within QI-defined candidate sets but does not have access to arbitrary external data sources. We report risk under this bounded adversary; stronger adversaries with auxiliary data may achieve higher linkability than CVPL estimates suggest.

\subsection{Complementing Formal Criteria}\label{sec:complementing-formal}

CVPL does not replace formal privacy criteria; it complements them. Our results suggest that formal compliance is \textbf{necessary but not sufficient} for privacy protection: a k-anonymous dataset may exhibit residual linkability due to preserved correlations in non-QI attributes.

This observation is consistent with known limitations of k-anonymity, including vulnerability to attribute disclosure~\cite{machanavajjhala2007}, background knowledge attacks~\cite{zhao2021ctab}, and the general inadequacy of syntactic criteria against semantic inference~\cite{li2007}.

CVPL provides empirical validation that formal guarantees translate to actual protection in specific scenarios, measuring a complementary risk dimension: linkability under a specified similarity and blocking model rather than indistinguishability within equivalence classes.

In practice, a data protection officer would use CVPL as an ``anytime'' diagnostic loop: select a threat model (blocking keys and attacker strictness), calibrate $\optau$ to a target false-match rate, run CVPL across candidate protection configurations, and compare risk surfaces and feature contributions. If CVPL-LR remains high despite formal compliance, the officer can either strengthen protection, restrict release conditions (access control, limited copies), or revise the quasi-identifier policy and regeneration budget.

\subsection{Interpretability}\label{sec:interpretability}

CVPL offers advantages for practical deployment:
\begin{enumerate}
    \item \textbf{Transparent components:} Each pipeline operator (blocking, vectorization, projection, thresholding) has explicit semantics
    \item \textbf{Explainable results:} Risk is explained via similarity distributions, threshold exceedance rates, and candidate set sizes
    \item \textbf{Feature attribution:} PCA-weighted loadings indicate which attributes contribute to the similarity space used for linkage
    \item \textbf{Comparatively interpretable:} No learned attack models with opaque decision boundaries; however, choices of encoding, scaling, and dimensionality reduction affect results and require documentation
    \item \textbf{Audit compatibility:} Results are reportable to regulators and stakeholders in terms of concrete risk metrics
\end{enumerate}

This interpretability distinguishes CVPL from membership inference attacks (MIA), where attack success rates are difficult to decompose into actionable insights. We note that PCA-based attribution reflects structural influence on similarity space, not causal attribution; results depend on preprocessing choices and should be interpreted accordingly. To improve robustness, we recommend reporting loading stability under resampling (bootstrap) and documenting all preprocessing decisions.

\subsection{Implications for Synthetic Data}\label{sec:implications-synthetic}

For synthetic data, CVPL reveals a tension between utility and linkability. In our experiments, higher utility settings were associated with higher CVPL-LR values, suggesting that utility-preserving generators may retain structural proximity even when exact attribute correspondence is destroyed.

This does not mean synthetic data is inherently unsafe, but that:
\begin{itemize}
    \item Privacy claims should be quantified rather than asserted
    \item Generator calibration should incorporate linkage metrics alongside utility measures
    \item Universal ``privacy-preserving'' labels warrant scrutiny
\end{itemize}

Our finding that the synthetic generator exhibited 81.2\% RCE (Record Closest to Existing) in our experimental configuration (\cref{sec:results-synthetic}) indicates that many synthetic records have their nearest neighbor in the original dataset unusually close in representation space. This is consistent with memorization concerns raised in generative model literature~\cite{carlini2021extracting,jordon2022synthetic}. While that work focuses primarily on language models, analogous phenomena have been observed in tabular generators~\cite{stadler2022synthetic}. Our RCE findings are specific to the generator and configuration tested; sensitivity to generator architecture (\eg CTGAN, TVAE, copula-based methods) remains an open question.

\subsection{Relationship to MIA}\label{sec:relationship-mia}

CVPL and Membership Inference Attacks are complementary approaches addressing different threat surfaces:

\begin{table}[htbp]
\centering
\caption{Comparison of MIA and CVPL approaches.}
\label{tab:mia-cvpl-discussion}
\begin{tabular}{@{}lll@{}}
\toprule
\textbf{Aspect} & \textbf{MIA} & \textbf{CVPL} \\
\midrule
Target & ML model & Dataset \\
Question & ``Was record in training set?'' & ``Can record be linked?'' \\
Access required & Model queries & Data only \\
Compute cost & Varies (minutes to GPU-hours) & Typically medium (CPU-minutes) \\
Interpretability & Typically low & Comparatively high \\
Primary application & Model deployment & Data release \\
\bottomrule
\end{tabular}
\end{table}

\paragraph{Methodological note.}
We do not claim CVPL is superior to MIA. They address different threat surfaces and operate under different access assumptions. The MIA landscape is broad, encompassing white-box, black-box, and gray-box variants with varying computational requirements and attack sophistication. A comprehensive privacy assessment may require both dataset-level (CVPL) and model-level (MIA) evaluation.

\subsection{Implications for the Behavioral Fingerprint Finding}\label{sec:behavioral-fingerprint}

Our analysis indicates that 60.4\% of the measured linkability contribution (under PCA-weighted analysis) is associated with non-QI behavioral features. This finding has significant implications for privacy engineering.

\paragraph{Methodology.}
This decomposition was computed via PCA-weighted feature contribution analysis (\cref{sec:results-features}). Features were grouped into QI categories (age, gender, region) and non-QI categories (temporal patterns, channel preferences, response timing, product choices). Contribution reflects each feature group's loading on the principal components that define the similarity space. This is a structural contribution metric, not a marginal effect estimate (\eg Shapley values or ablation-based attribution).

\paragraph{Implications.}
\begin{enumerate}
    \item \textbf{K-anonymity addresses a subset of risk:} In our experiments, QI-only protection left substantial residual linkability through behavioral features
    \item \textbf{Behavioral data requires attention:} Channel preferences, timing patterns, and interaction dynamics can be highly identifying even when demographic attributes are protected
    \item \textbf{Utility-privacy tension is inherent:} Behavioral patterns are often precisely what analysts seek to preserve for downstream tasks
\end{enumerate}

These findings suggest that privacy engineering should consider behavioral fingerprints alongside demographic quasi-identifiers, particularly for marketing, healthcare, and financial datasets where temporal and behavioral signals are prevalent.

\subsection{Practical Applications}\label{sec:practical-applications}

CVPL is suitable for several privacy engineering tasks:
\begin{enumerate}
    \item \textbf{Privacy Impact Assessment:} Quantify linkage risk before data release
    \item \textbf{Mechanism Comparison:} Evaluate protection approaches with respect to the risk profiles they produce, not merely aggregate scores
    \item \textbf{Generator Calibration:} Tune synthetic data parameters using CVPL-LR as a feedback signal
    \item \textbf{Regulatory Compliance:} Demonstrate due diligence with interpretable risk metrics
    \item \textbf{Internal Audit:} Monitor protection effectiveness over time or across data versions
    \item \textbf{Feature Engineering for Privacy:} Identify which attributes contribute most to linkability and prioritize protection accordingly
\end{enumerate}

\paragraph{Typical workflow.}
\begin{enumerate}
    \item Define blocking keys and similarity threshold based on the threat scenario relevant to the use case
    \item Run CVPL pipeline on the protected dataset against the original (or a reference population)
    \item Examine CVPL-LR, similarity distributions, candidate set sizes, and feature contributions
    \item Iterate protection parameters, apply additional masking, or flag high-risk records based on results
\end{enumerate}

\subsection{When Ground Truth is Unavailable}\label{sec:no-ground-truth}

In real-world audits, ground-truth linkages may be unavailable. CVPL remains applicable through several diagnostic modes:

\paragraph{Self-linkage diagnostic.}
Computing leave-one-out self-linkage $\CVPLLR(\Dor, \Dor)$ with identity matches excluded provides a sanity check: if self-linkage is low under the chosen $\optau$ and representation, the feature set lacks discriminative power for the linkage task. High self-linkage is a necessary (but not sufficient) condition for meaningful risk assessment; it confirms that the pipeline can detect linkages when they exist.

\paragraph{Relative comparison.}
Even without absolute ground-truth validation, CVPL can compare protection mechanisms: if $\CVPLLR(\Dor, \Dpr_A) < \CVPLLR(\Dor, \Dpr_B)$, mechanism A provides better protection under CVPL's threat model and similarity assumptions. This enables mechanism selection without requiring labeled linkage data.

\paragraph{Probabilistic interpretation.}
CVPL-LR provides an estimate of linkage feasibility under the specified adversary model and similarity assumptions. Low CVPL-LR suggests reduced linkage risk under those assumptions, but does not constitute proof of safety, as real-world attackers may have capabilities or auxiliary information not captured by the model.

\subsection{Limitations and Sensitivity}\label{sec:discussion-limitations}

CVPL's applicability and interpretation are subject to several limitations:

\paragraph{Sensitivity to pipeline choices.}
Results depend on blocking key selection, distance metric, feature scaling, encoding scheme, and dimensionality reduction parameters. Different choices can yield different risk estimates for the same dataset. Users should document and justify these choices for reproducibility.

\paragraph{Threshold dependence.}
The similarity threshold $\optau$ directly affects $\CVPLLR$ magnitude. We recommend reporting risk surfaces $\RiskSurf{\lambda}{\optau}$ or sensitivity analyses rather than single-point estimates where feasible.

\paragraph{Curse of dimensionality.}
In high-dimensional or sparse feature spaces, distance-based similarity may become less meaningful. Projection via PCA or similar methods mitigates but does not eliminate this concern.

\paragraph{Adversary model boundaries.}
CVPL may underestimate risk when attackers possess external auxiliary data not captured in blocking, or when attackers use more sophisticated matching algorithms than cosine similarity. Conversely, CVPL may overestimate risk if the candidate model is too generous (\eg overly broad blocking) or if the threshold $\optau$ is set too low relative to realistic attacker capabilities.

\paragraph{Single-dataset scope.}
The experimental results in this work are demonstrated on a single advertising-response dataset ($n = 10{,}000$). Generalization to other domains (healthcare, finance, census) requires further validation, as feature distributions and linkage dynamics may differ substantially.

\paragraph{Not a formal guarantee.}
CVPL provides empirical risk estimates under specified assumptions, not formal privacy guarantees. It complements rather than replaces differential privacy, k-anonymity, or other formal frameworks.


\section{Limitations and Future Work}\label{sec:limitations}

\subsection{Explicit Non-Guarantees}\label{sec:non-guarantees}

CVPL \textbf{does not provide}:
\begin{itemize}
    \item Differential privacy guarantees
    \item Information-theoretic bounds
    \item Worst-case adversarial proofs
    \item Universal risk estimates across all scenarios
\end{itemize}

CVPL provides \textbf{empirical risk estimates} under specified threat models. Users should interpret results as assessments conditional on the stated adversary model, similarity metric, and blocking configuration.

\subsection{Simulation-Based Validation}\label{sec:simulation-validation}

\paragraph{Key limitation.}
All experiments in this paper use simulated data. This enables controlled evaluation with known ground truth but raises the question of generalization to real data with complex, unknown dependencies.

We validated simulation realism by matching distributional characteristics to reference patterns observed in public tabular datasets: feature correlations reflecting plausible demographic-behavioral relationships, heavy-tailed distributions (verified via tail index estimation), and categorical cardinality and missingness patterns consistent with real-world data collection. However, formal distributional comparisons against specific public benchmarks is not included in this work.

\paragraph{Future work.}
Apply CVPL to established public benchmarks (\eg Adult Census, German Credit) as well as modern tabular datasets with higher-cardinality categorical features, missingness, and richer behavioral attributes (\eg transaction logs, healthcare claims), using datasets where linkage ground truth is known or can be constructed via controlled identifiers or synthetic pairing protocols.

\subsection{Scenario Dependence}\label{sec:scenario-dependence}

Results depend on: feature selection and encoding choices, blocking strategy and key granularity, attacker capability assumptions, and similarity metric and threshold calibration. CVPL assesses risk within its specified threat model, not universally. Different pipeline configurations can yield different risk estimates for the same dataset.

\paragraph{Preprocessing sensitivity.}
Certain encoding choices are particularly influential: high-cardinality categorical encoded via one-hot expansion can dominate similarity space, missingness patterns may act as unintended fingerprints, and scaling choices (standardization, normalization, none) affect both cosine similarity and PCA behavior. Users should document preprocessing decisions and consider sensitivity analyses across encoding alternatives.

\paragraph{Extended privacy criteria.}
This work focuses on k-anonymity as the primary formal criterion. Extensions such as $\ell$-diversity and $t$-closeness impose additional constraints on sensitive attribute distributions within equivalence classes. Evaluating CVPL under these criteria---and understanding how the additional constraints affect linkability---represents a natural direction for future work.

\subsection{Computational Cost}\label{sec:computational-cost}

CVPL is computationally heavier than purely formal privacy metrics because it evaluates empirical linkability via candidate-set construction and pairwise similarity scoring, and (optionally) applies linear-algebra transformations for representation analysis.

\paragraph{Complexity.}
The dominant term is the number of pairwise comparisons performed within blocks. Let blocks be $b \in \Bset$ with sizes $|b|$. The total comparison workload scales as:
\begin{equation}\label{eq:comparison-complexity}
    O\biggl(\sum_{b \in \Bset} |b|^2\biggr),
\end{equation}
equivalently $\sum_b |b|(|b| - 1)/2$ pair checks.

This makes the runtime highly sensitive to the block-size distribution, not just the total dataset size $n$. Large blocks dominate: one block of size 1,500 requires $1500 \cdot 1499/2 \approx 1.12$M comparisons, whereas 100 blocks of size 100 require $100 \cdot 100 \cdot 99/2 = 495$K comparisons in total. Consequently, pathological blocking (a few very large blocks) can push the runtime toward an effectively quadratic regime in $n$.

Additional costs are typically lower-order relative to pairwise scoring. Projection/representation analysis is typically $O(n \cdot d \cdot k)$ for transforming $n$ records from $d$ features to $k$ components.

\paragraph{Mitigations implemented.}
\begin{itemize}
    \item Chunked similarity computation (${\approx}10\times$ speedup over naive all-pairs implementation)
    \item $O(1)$ index lookup for ground-truth validation
    \item Sampling-based estimation of the $\Sminus$ distribution for large candidate sets
    \item Progressive blocking with early stopping once risk estimates converge
\end{itemize}

\paragraph{Future work.}
Approximate nearest-neighbor methods (\eg FAISS, Annoy) to reduce the effective cost of large blocks; lightweight surrogate models for rapid screening and prioritization before running full CVPL evaluation.

\subsection{Ground Truth Requirements}\label{sec:ground-truth-requirements}

\textbf{For quantitative validation and calibration}, true linkages are required. This limits rigorous evaluation to scenarios where correspondence is known (simulations, controlled experiments, or datasets with explicit identifiers).

\textbf{For risk screening and mechanism comparison}, CVPL can operate without ground truth via self-linkage diagnostics and relative comparison (\cref{sec:no-ground-truth}). However, absolute calibration of CVPL-LR to actual linkage probability requires validation data.

\paragraph{Future work.}
Develop proxy-based calibration methods, unsupervised diagnostics, and self-consistency checks that provide meaningful risk assessment without requiring labeled linkage data.

\subsection{Strong Protection Regimes}\label{sec:strong-protection}

Under very strong protection, similarities may become noise-dominated. CVPL may report low risk, not because data is protected, but because the representational signal is destroyed.

\paragraph{Practical guardrails.}
\begin{itemize}
    \item \textbf{Self-linkage check:} If leave-one-out $\CVPLLR(\Dor, \Dpr)$ is low under the chosen $\optau$ and representation, the feature set lacks discriminative power
    \item \textbf{Variance explained:} For PCA-based projection, report cumulative variance; very low values indicate potential signal collapse
    \item \textbf{Utility cross-check:} Confirm that downstream utility metrics remain meaningful; collapsed representations typically show degraded utility
\end{itemize}

We recommend reporting a representation validity indicator alongside CVPL-LR to distinguish genuine protection from signal degeneration.

\paragraph{Future work.}
Formal stopping criteria and noise-regime detection methods.

\subsection{Adversarial Robustness}\label{sec:adversarial-robustness}

If a data custodian or synthetic data vendor knows CVPL will be used for evaluation, they might optimize perturbations or generator parameters specifically to defeat cosine similarity or PCA projection while leaving other vulnerabilities intact.

\paragraph{Partial mitigation.}
Use multiple projection methods (PCA, random projections, optionally UMAP) and multiple similarity metrics. UMAP can improve sensitivity to non-linear structure but reduces interpretability and stability; we treat it as optional for robustness ensembles rather than primary analysis. Ensemble CVPL across configurations is more robust to targeted optimization than any single configuration.

\paragraph{Inherent limitation.}
This is fundamentally an arms race. Ensemble approaches raise the bar but do not eliminate the possibility of adversarial optimization. Audit-time randomization (randomly selecting an evaluation configuration from a predefined family) can further complicate gaming, but sufficiently motivated adversaries may still find exploitable gaps.

\paragraph{Future work.}
Formal analysis of metric gaming; randomized audit protocols; diversified evaluation frameworks.

\subsection{Uncertainty Quantification}\label{sec:uncertainty}

CVPL-LR is reported as a point estimate, but empirical metrics have associated uncertainty.

\paragraph{Current approach.}
Standard errors and confidence intervals can be estimated via bootstrap resampling, preferably using block-aware bootstrap (resampling blocks or clustered units) to preserve within-block correlation structure. Record-level bootstraps may be used as a fallback when blocking is weak or when blocks are small. We recommend reporting 95\% confidence intervals for CVPL-LR, particularly when comparing protection mechanisms where differences may be within sampling variability.

\paragraph{Future work.}
Analytic variance estimators; Bayesian credible intervals; formal block-aware resampling procedures for clustered data.

\subsection{Failure Modes}\label{sec:failure-modes}

CVPL may \textbf{underestimate} actual linkage risk when:
\begin{itemize}
    \item Attackers possess auxiliary data not captured in blocking keys
    \item Attackers use more sophisticated similarity functions than cosine similarity
    \item Attackers employ different (finer-grained) blocking strategies than those modeled
    \item External identifiers or rare attribute combinations enable linkage outside the modeled candidate space
\end{itemize}

CVPL may \textbf{overestimate} linkage risk when:
\begin{itemize}
    \item Blocking is overly broad, creating artificially large candidate sets
    \item Threshold $\optau$ is set too low relative to realistic attacker precision requirements
    \item Representation captures noise or artifacts rather than a meaningful linkage signal
\end{itemize}

\paragraph{Remedy for overestimation diagnosis.}
Report block-size distribution and run sensitivity analyses across blocking granularities; consistent risk estimates across reasonable blocking choices increase confidence in results.

Users should interpret CVPL-LR as risk under the specified adversary model, not as a universal bound.

\subsection{Extensions and Future Directions}\label{sec:future-directions}

Natural extensions, prioritized by practical impact:

\paragraph{Near-term priorities.}
\begin{enumerate}
    \item \textbf{Composition across multiple releases:} Quantify cumulative linkage risk when the same population appears in successive data releases with different protection
    \item \textbf{Group-level linkage:} Extend to household, organization, or account-level entities where multiple records share a group identifier
    \item \textbf{Attribute inference integration:} Complement linkage risk with attribute disclosure risk for sensitive fields
\end{enumerate}

\paragraph{Longer-term directions.}
\begin{itemize}
    \item Temporal and sequential data (event logs, trajectories)
    \item Multi-party scenarios (federated or distributed data holders)
    \item Integration with differential privacy budgeting
    \item Standardized benchmark suite for linkage risk evaluation
\end{itemize}


\section{Conclusion}\label{sec:conclusion}

This paper introduced \textbf{CVPL (Cluster-Vector-Projection Linkage)}, an interpretable framework for empirical assessment of residual linkage risk between original and protected tabular datasets.

\paragraph{Contributions.}
CVPL provides:
\begin{enumerate}
    \item \textbf{Continuous, threshold-aware risk measurement} via risk surfaces $\RiskSurf{\lambda}{\optau}$ that capture joint dependence on protection strength and attacker strictness, replacing binary compliance checks
    \item \textbf{Explicit threat model with interpretable operators:} blocking defines candidate space, vectorization enables semantic comparison, projection controls dimensionality, and thresholding encodes attacker precision
    \item \textbf{Progressive blocking with proven monotonicity (\cref{thm:monotonicity}):} coarse-to-fine refinement yields anytime lower bounds on linkage risk with predictable convergence
    \item \textbf{Feature attribution mechanism:} PCA-weighted contribution analysis identifies which attributes drive linkability within the similarity space
\end{enumerate}

\paragraph{Key findings.}
Our experiments on a 10,000-record simulated dataset across 19 protection configurations demonstrate:
\begin{enumerate}
    \item \textbf{Utility-preserving mechanisms often retain exploitable structure:} in our setting, protection methods that preserve statistical utility also preserve geometric proximity sufficient for linkage
    \item \textbf{Formal k-anonymity compliance does not eliminate empirical linkage risk:} CVPL-LR increased with $k$ due to existential effects in larger equivalence classes
    \item \textbf{Behavioral fingerprints dominated linkage risk in our experiments:} 60.4\% of measured linkability contribution (PCA-weighted) arose from non-QI features (temporal patterns, channel preferences), with only 39.6\% from demographic quasi-identifiers
    \item \textbf{Classical metrics provide an incomplete assessment:} Fellegi--Sunter exhibited systematic over-linking (93\% linkage rate, 13\% precision), while DCR failed to distinguish recoverable from non-recoverable proximity
\end{enumerate}

\paragraph{Scope and positioning.}
\textbf{Important:} CVPL provides empirical risk estimates under specified threat models, not formal privacy guarantees. It does not replace differential privacy, k-anonymity, or other formal frameworks but complements them by quantifying residual linkability that formal criteria may not capture.

More broadly, this work supports a transition from binary conceptions of privacy toward a \textbf{risk-oriented, measurable, and scenario-aware paradigm} essential for mature data governance practices.

\paragraph{Reproducibility statement.}
To ensure reproducibility:
\begin{enumerate}
    \item \textbf{Code:} Implementation available at \url{https://github.com/DGT-Network/cvpl}
    \item \textbf{Configuration:} Full experimental parameters documented in repository (\texttt{experiments/run\_paper\_figures.py})
    \item \textbf{Seeds:} Fixed and documented (base seed: 42; 5 seeds for confidence intervals)
    \item \textbf{Data:} Simulation code provided with deterministic generation (\texttt{src/cvpl/data.py})
    \item \textbf{Environment:} Dependencies specified in \texttt{pyproject.toml}; tested on Python 3.10+
\end{enumerate}

All experiments completed on commodity hardware (8-core CPU, 32GB RAM) in under 4 minutes per configuration for $n = 10{,}000$.

\paragraph{Acknowledgments.}
(Omitted for anonymous review.)

\appendix

\section{Mathematical Properties}\label{sec:appendix-math}

\subsection{Empirical Consistency}\label{sec:appendix-consistency}

\begin{property}[Consistency]
For fixed operators and parameters, the empirical CVPL-LR converges in probability to its population counterpart.
\end{property}

Define the indicator for record $x_i$:
\begin{equation}\label{eq:indicator-appendix}
    I_i(\optau) = \Ind\{\exists\, y \in \mathcal{C}(x_i) : s(x_i, y) \geq \optau\}
\end{equation}
where $\mathcal{C}(x_i)$ is the candidate set determined by blocking. The empirical CVPL-LR is:
\begin{equation}\label{eq:empirical-cvpl}
    \widehat{R}_{\text{CVPL}}(\optau) = \frac{1}{n} \sum_{i=1}^{n} I_i(\optau).
\end{equation}
Under the assumption that records are drawn i.i.d.\ from a population distribution, with blocking treated as a measurable function of record attributes, the weak law of large numbers yields:
\begin{equation}\label{eq:convergence}
    \widehat{R}_{\text{CVPL}}(\optau) \xrightarrow{p} \Exp[I(\optau)] = R_{\text{CVPL}}(\optau)
\end{equation}
as $n \to \infty$.

\paragraph{Note.}
The i.i.d.\ assumption applies to the joint distribution of (record, block-assignment) pairs. When blocking keys depend on record attributes, this is automatically satisfied.

\subsection{True-Match Recall Decomposition}\label{sec:appendix-recall}

\begin{property}[Conditional Decomposition]
The probability that the true match is successfully linked decomposes via conditional probability:
\begin{equation}\label{eq:recall-decomposition}
    R_{\text{recall}}(\optau) = \Rblock \cdot R_{\text{match}}(\optau \mid B)
\end{equation}
\end{property}

where:
\begin{itemize}
    \item $\Rblock = \Prob(\text{true pair } (x, \ystar) \text{ shares a block})$
    \item $R_{\text{match}}(\optau \mid B) = \Prob(s(x, \ystar) \geq \optau \mid \text{same block})$
\end{itemize}

\paragraph{Clarification.}
This property concerns the \textbf{true-match recall} (whether the genuine corresponding record $\ystar$ exceeds the threshold), which is distinct from CVPL-LR defined in \cref{eq:empirical-cvpl}.

CVPL-LR measures \textbf{existential linkability} (whether \emph{any} candidate exceeds the threshold), while $R_{\text{recall}}$ measures whether the \emph{correct} candidate does.

\paragraph{Derivation.}
Let $B$ denote the event that a true pair $(x, \ystar)$ shares a block, and $M_\optau$ denote the event that $s(x, \ystar) \geq \optau$. Then:
\begin{equation}
    \Prob(B \cap M_\optau) = \Prob(B) \cdot \Prob(M_\optau \mid B).
\end{equation}
The conditional formulation avoids assuming independence between blocking and similarity, which would be incorrect since blocking keys (QIs) influence similarity scores.

\paragraph{Implication.}
Blocking effectiveness ($\Rblock$) and within-block discriminability ($R_{\text{match}}$) can be analyzed separately, but their interaction must be evaluated empirically. High CVPL-LR with low $R_{\text{recall}}$ indicates many false positives; the gap between them characterizes precision.

\subsection{False-Positive Control}\label{sec:appendix-fp}

\begin{property}[Threshold Calibration]
For threshold $\widehat{\optau}$ satisfying $\Prob(\Sminus \geq \widehat{\optau}) \leq \alpha$, the pairwise false link probability is controlled at level $\alpha$.
\end{property}

\paragraph{Important clarification.}
This calibration controls the \emph{pairwise} false-positive rate (probability that a single non-match pair exceeds the threshold). The \emph{existential} false-positive rate (probability that at least one non-match in a candidate set exceeds threshold) depends additionally on candidate set size $|\mathcal{C}(x)|$.

For a candidate set of size $m$ with non-match similarities:
\begin{equation}\label{eq:existential-fp}
    \Prob\Bigl(\max_{y \in \mathcal{C}(x)} \Sminus(x, y) \geq \optau\Bigr) = 1 - (1 - \alpha)^m \approx m\alpha \text{ for small } \alpha.
\end{equation}
(This approximation assumes approximate independence within candidate sets; in practice, similarities may exhibit weak dependence through shared block characteristics.)

\paragraph{Practical approaches.}
\begin{enumerate}
    \item Calibrate $\widehat{\optau}$ on the empirical distribution of $\max \Sminus$ within candidate sets
    \item Apply Bonferroni-style adjustment: use $\alpha' = \alpha / \Exp[|\mathcal{C}(x)|]$
    \item Report both pairwise and existential FP rates for transparency
\end{enumerate}

\subsection{Invariant Preservation (Informal)}\label{sec:appendix-invariant}

\begin{property}[Intuition]
Utility-preserving protection mechanisms that maintain covariance structure tend to preserve geometric proximity in latent space.
\end{property}

If protection approximately preserves the covariance matrix:
\begin{equation}\label{eq:covariance-preservation}
    \|\mathrm{Cov}(\Dpr) - \mathrm{Cov}(\Dor)\|_F \leq \varepsilon
\end{equation}
then PCA projections of corresponding records $(x, \ystar)$ remain close. This provides intuition for why utility-preserving mechanisms retain linkable structure, but it is not a formal guarantee.

\paragraph{Caveat.}
The relationship between covariance preservation and linkability depends on the specific projection, the correspondence mapping, and the similarity metric. A formal bound would require additional assumptions not made in this work.


\section{Ablation Details}\label{sec:appendix-ablation}

\paragraph{Experimental conditions.}
All ablation results use $\optau = 0.90$ and report the existential linkage rate (CVPL-LR). Results averaged over 5 random seeds; standard deviations were consistently below $0.03$ and are omitted for compactness.

\paragraph{Notation.}
\begin{itemize}
    \item $\Delta\%$ computed as relative change: $(v - v_0)/v_0 \times 100\%$ where $v_0$ is the default configuration value
    \item Compute Factor: ratio of end-to-end wall-clock time relative to default configuration
\end{itemize}

\subsection{Blocking Variants}\label{sec:ablation-blocking}

\begin{table}[htbp]
\centering
\caption{Ablation: blocking strategy variants.}
\label{tab:ablation-blocking}
\begin{tabular}{@{}lrrrp{3cm}@{}}
\toprule
\textbf{Variant} & \textbf{CVPL-LR} & \textbf{$\Delta$\%} & \textbf{Compute} & \textbf{Notes} \\
\midrule
Default (age\_bin + region + gender) & 0.42 & --- & $1.0\times$ & Baseline \\
No blocking & 0.61 & +45\% & $10\times$ & All pairs compared \\
Weak (region\_level1 only) & 0.55 & +31\% & $3\times$ & Large blocks \\
Strong (full QI match) & 0.28 & $-33$\% & $0.3\times$ & May miss true links \\
Ensemble (3 schemes) & 0.45 & +7\% & $3\times$ & Higher recall \\
\bottomrule
\end{tabular}
\end{table}

\subsection{Projection Variants}\label{sec:ablation-projection}

\begin{table}[htbp]
\centering
\caption{Ablation: projection method variants.}
\label{tab:ablation-projection}
\begin{tabular}{@{}lrrrp{3cm}@{}}
\toprule
\textbf{Variant} & \textbf{CVPL-LR} & \textbf{P@1} & \textbf{$\Delta$\%} & \textbf{Notes} \\
\midrule
PCA 90\% variance (default) & 0.42 & 0.78 & --- & Baseline \\
No projection (raw features) & 0.55 & 0.65 & +31\% & High variance \\
Robust PCA & 0.40 & 0.79 & $-5$\% & Better outlier handling \\
UMAP ($n\_\text{neighbors}{=}15$) & 0.43 & 0.76 & +2\% & Lower interpretability \\
Random projection ($k{=}20$) & 0.48 & 0.69 & +14\% & Faster, less precise \\
\bottomrule
\end{tabular}
\end{table}

\subsection{Similarity Variants}\label{sec:ablation-similarity}

\begin{table}[htbp]
\centering
\caption{Ablation: similarity metric variants.}
\label{tab:ablation-similarity}
\begin{tabular}{@{}lrrrp{3cm}@{}}
\toprule
\textbf{Variant} & \textbf{CVPL-LR} & \textbf{P@1} & \textbf{$\Delta$\%} & \textbf{Notes} \\
\midrule
Cosine (default) & 0.42 & 0.78 & --- & Baseline \\
Euclidean & 0.47 & 0.72 & +12\% & Scale-sensitive \\
Mahalanobis & 0.40 & 0.80 & $-5$\% & $2\times$ compute cost \\
Jaccard (binarized) & 0.34 & 0.62 & $-19$\% & Information loss \\
\bottomrule
\end{tabular}
\end{table}

\subsection{PCA Fitting Strategy}\label{sec:ablation-pca}

\begin{table}[htbp]
\centering
\caption{Ablation: PCA fitting strategy.}
\label{tab:ablation-pca}
\begin{tabular}{@{}lrrp{5cm}@{}}
\toprule
\textbf{Strategy} & \textbf{CVPL-LR} & \textbf{P@1} & \textbf{Recommendation} \\
\midrule
Joint ($\Dor \cup \Dpr$) & 0.42 & 0.78 & \textbf{Default} -- captures shared structure \\
Source-only ($\Dor$) & 0.38 & 0.81 & Conservative estimate \\
Target-only ($\Dpr$) & 0.35 & 0.75 & Loses original structure \\
\bottomrule
\end{tabular}
\end{table}

\paragraph{Recommendation.}
Joint fitting is the preferred default because CVPL is intended to detect residual structure common to both datasets. Source-only fitting may be appropriate when $\Dpr$ exhibits a strong distribution shift relative to $\Dor$, but it can underestimate linkage risk in typical utility-preserving settings.


\section{Experimental Configuration}\label{sec:appendix-config}

This appendix provides the complete experimental configuration used in all simulations and evaluations, ensuring full reproducibility of reported results.

\subsection{Full Configuration}\label{sec:config-full}

\begin{lstlisting}[basicstyle=\ttfamily\small,frame=single,caption={Complete experimental specification (config.yaml)},label={lst:config}]
data:
  n_records: 10000
  seed: 42
  outlier_fraction: [0, 0.01, 0.05]
  
  # Demographics distribution
  age_distribution: truncated_normal
  age_mean: 42
  age_std: 15
  age_min: 18
  age_max: 80
  
  # Regions (hierarchical)
  regions:
    level1: [North, South, East, West]
    level2: [Urban, Suburban, Rural]
  
  # Channels
  channels: [Search, Social, Video, Display, Email]
  channel_weights: [0.3, 0.25, 0.2, 0.15, 0.1]
  
  # Products
  products: 50
  product_distribution: zipf
  product_alpha: 1.5

blocking:
  quasi_identifiers: [age_bin, region_level, gender]
  age_bin_width: 10
  region_level: 1

vectorization:
  numeric: [age, days_after_ad, hour, dow]
  categorical: [region, place, brand_product, channel]
  encoding: one_hot
  scaling: z_score

projection:
  method: pca
  variance_retained: 0.90
  min_components: 3
  max_components: 50
  fit_on: joint

similarity:
  metric: cosine

thresholds:
  range: [0.70, 0.99]
  step: 0.01
  default: 0.90

protection:
  k_anonymity: [2, 3, 5, 7, 10, 15, 20, 30, 50]
  noise_levels:
    low: {age: 1, time: 3600}
    medium: {age: 3, time: 86400}
    high: {age: 5, time: 604800}
  synthetic_correlation:
    high: 0.95
    medium: 0.80
    low: 0.60

evaluation:
  seeds: 5
  bootstrap: 100
  confidence_level: 0.95

scalability:
  max_block_size: 5000
  ann_threshold: 10000
  ann_method: faiss
  ann_nprobe: 10
\end{lstlisting}

\paragraph{Note on thresholds.}
\texttt{exact\_threshold} and \texttt{ann\_threshold} are set equal (5000) as a transition point. In our experiments, no blocks exceeded this threshold, so ANN was not invoked.

\paragraph{Note on naming.}
\texttt{region\_level1} in blocking corresponds to the coarse geographic partition (North/South/East/West); \texttt{region\_level2} (Urban/Suburban/Rural) is used only in vectorization, not blocking.

\subsection{Computational Resources}\label{sec:config-compute}

\begin{table}[htbp]
\centering
\caption{Computational resources for experimental conditions.}
\label{tab:compute-resources}
\begin{tabular}{@{}lrrrrp{2.5cm}@{}}
\toprule
\textbf{Experiment} & \textbf{Records} & \textbf{Configs} & \textbf{Time} & \textbf{Memory} & \textbf{Hardware} \\
\midrule
Single config & 10,000 & 1 & ${\sim}12$s & 2GB & 8-core CPU, 32GB \\
Main experiments & 10,000 & 19 & ${\sim}229$s & 4GB & 8-core CPU, 32GB \\
Full grid + seeds & 10,000 & $19 \times 5$ & ${\sim}19$min & 4GB & 8-core CPU, 32GB \\
Scalability test & 100,000 & 1 & ${\sim}25$min & 12GB & 8-core CPU, 32GB \\
\bottomrule
\end{tabular}
\end{table}


\section{Comparison with MIA Resources}\label{sec:appendix-mia}

Direct comparison between CVPL and Membership Inference Attacks (MIA) is methodologically limited since they target different objects (datasets vs.\ models). However, we provide indicative resource requirements for context.

\begin{table}[htbp]
\centering
\caption{Resource comparison: CVPL vs.\ related privacy assessment methods.}
\label{tab:mia-comparison}
\begin{tabular}{@{}llllll@{}}
\toprule
\textbf{Method} & \textbf{Target} & \textbf{Setup} & \textbf{Runtime (10K)} & \textbf{GPU} & \textbf{Reference} \\
\midrule
MIA (shadow models) & Model & Hours & Minutes & Typical & \cite{shokri2017membership} \\
MIA (memorization) & Model & None & Minutes & Varies & \cite{carlini2019secret,carlini2021extracting} \\
\textbf{CVPL} & Dataset & None & $<4$ min & No & This work \\
Fellegi--Sunter & Dataset & Minutes & Minutes & No & \cite{fellegi1969theory} \\
DCR/NNDR & Dataset & None & Seconds & No & \cite{jordon2022synthetic} \\
\bottomrule
\end{tabular}
\end{table}

\paragraph{Caveat.}
MIA complexity varies substantially with attack variant, target model architecture, and available access. CVPL's primary advantage is applicability: it evaluates dataset releases directly, without requiring a trained model.


\section{Extension to Real Data}\label{sec:appendix-real-data}

\subsection{Adult Census Validation (Proposed)}\label{sec:adult-validation}

For future validation, we propose applying CVPL to the UCI Adult Census dataset~\cite{uci_adult}:
\begin{enumerate}
    \item \textbf{Original:} 32,561 records, 14 attributes
    \item \textbf{QIs for protection:} age, education, occupation, native-country
    \item \textbf{Protection:} k-anonymity~\cite{sweeney2002} via deterministic generalization applied per record ($k \in \{5, 10, 20\}$)
    \item \textbf{Ground truth:} Known 1-1 correspondence (same individuals before/after protection)
\end{enumerate}

\paragraph{Expected findings.}
\begin{itemize}
    \item CVPL-LR $> 0$ even for $k = 20$ due to preserved income-education correlations
    \item Feature attribution dominated by workclass and hours-per-week (non-QI)
    \item Fellegi--Sunter~\cite{fellegi1969theory} degradation under heavy generalization
\end{itemize}

\subsection{Protocol for Audit Without Ground Truth}\label{sec:audit-protocol}

When true linkages are unavailable:
\begin{enumerate}
    \item \textbf{Self-linkage diagnostic:} Compute leave-one-out $\CVPLLR(\Dpr, \Dpr)$ with identity matches excluded. High values (relative to chosen $\optau$) confirm the representation has discriminative power. Note: this is a sanity check, not a guarantee of 1.0.
    \item \textbf{Risk screening:} Report $\CVPLLR(\Dor, \Dpr)$ as an \emph{estimate} of linkage feasibility under the CVPL adversary model. This is not a formal upper bound, as real attackers may have capabilities outside the model.
    \item \textbf{Mechanism comparison:} Relative ordering $\CVPLLR(\Dor, \Dpr_A) < \CVPLLR(\Dor, \Dpr_B)$ indicates that mechanism A provides better protection under CVPL assumptions, even without absolute calibration.
    \item \textbf{Feature attribution:} Identify the highest-contributing features for targeted remediation or additional protection.
\end{enumerate}


\section{Glossary}\label{sec:appendix-glossary}

\begin{table}[htbp]
\centering
\caption{Glossary of terms and notation.}
\label{tab:glossary}
\begin{tabular}{@{}lp{10cm}@{}}
\toprule
\textbf{Term} & \textbf{Definition} \\
\midrule
CVPL & Cluster-Vector-Projection Linkage -- the proposed framework \\
CVPL-LR & CVPL Linkage Rate -- primary metric measuring existential linkability (\cref{sec:appendix-consistency}) \\
$R_{\text{recall}}$ & True-match recall -- probability that the correct match exceeds threshold (\cref{sec:appendix-recall}) \\
$\optau$ (tau) & Similarity threshold defining attacker strictness \\
$\widehat{\optau}$ & Calibrated threshold (\eg for false-positive control at level $\alpha$) \\
$\lambda$ (lambda) & Blocking granularity parameter \\
$\rho$ (rho) & Correlation retention factor for synthetic data generation~\cite{xu2019modeling} \\
Blocking & Partitioning records by QI values to constrain candidate sets \\
Candidate set $\mathcal{C}(x)$ & Set of protected records sharing a block with original record $x$ \\
QI & Quasi-identifier -- attribute observable to and usable by attacker~\cite{sweeney2002} \\
Existential linkage & At least one candidate exceeds the similarity threshold \\
Top-1 linkage & The highest-similarity candidate is the true match \\
Risk surface $\RiskSurf{\lambda}{\optau}$ & Linkage risk as a joint function of blocking and threshold \\
$\Splus$ & Similarity distribution for true-match pairs \\
$\Sminus$ & Similarity distribution for non-match pairs \\
$\ystar$ & True match -- the protected record corresponding to original $x$ \\
Precision@1 & Fraction of top-1 correct predictions \\
DCR & Distance to Closest Record -- nearest-neighbor distance metric~\cite{jordon2022synthetic} \\
NNDR & Nearest Neighbor Distance Ratio -- uniqueness metric~\cite{jordon2022synthetic} \\
RCE & Record Closest to Existing -- memorization diagnostic (introduced in this work) \\
FS & Fellegi--Sunter -- classical probabilistic record linkage model~\cite{fellegi1969theory} \\
\bottomrule
\end{tabular}
\end{table}

\bibliographystyle{plain}
\bibliography{bib/references}

\end{document}